\def \bI {\mathbb I}
\def \bN {\mathbb N}
\def \bZ {\mathbb Z}
\def \bE {\mathbb E}
\def \bF {\mathbb F}
\def \bR {\mathbb R}
\def \bS {\mathbb S}
\def \bF {\mathbb F}
\def \cB {{\cal B}}
\def \cS {{\cal S}}
\def \cE {{\cal E}}
\def \and {\, \mbox{\rm and}\, }
\def \sinc {\,{\rm sinc}\,}
\def \supp {\,{\rm supp}\,}
\newtheorem{theorem}{\it Theorem}
\newtheorem{lemma}[theorem]{\it Lemma}
\newenvironment{proof}{\noindent{\it Proof:}}{\quad \hfill$\Box$\vspace{2ex}}
\begin{document}

\title{Convergence Analysis of the Gaussian Regularized Shannon Sampling Formula}


\author{Rongrong~Lin~
        and~Haizhang~Zhang 
\thanks{This work was supported in part by Natural Science Foundation of China under grants 11222103 and 11101438.}
\thanks{R. Lin is with the School of Mathematics and Computational Science, Sun Yat-sen University, Guangzhou 510275, P. R. China (e-mail: linrr@mail2.sysu.edu.cn).}
\thanks{H. Zhang (corresponding author) is with the School of Mathematics and Computational
Science and Guangdong Province Key Laboratory of Computational Science, Sun Yat-sen University, Guangzhou 510275, P. R. China (e-mail: zhhaizh2@sysu.edu.cn).}}
\maketitle

\begin{abstract}
We consider the reconstruction of a bandlimited function from its finite localized sample data. Truncating the classical Shannon sampling series results in an unsatisfactory convergence rate due to the slow decayness of the sinc function. To overcome this drawback, a simple and highly effective method, called the Gaussian regularization of the Shannon series, was proposed in engineering and has received remarkable attention.
It works by multiplying the sinc function in the Shannon series with a regularization Gaussian function. L. Qian (Proc. Amer. Math. Soc., 2003) established the convergence rate of $O(\sqrt{n}\exp(-\frac{\pi-\delta}2n))$ for this method, where $\delta<\pi$ is the bandwidth and $n$ is the number of sample data. C. Micchelli {\it et al.} (J. Complexity, 2009) proposed a different regularization method and obtained the corresponding convergence rate of $O(\frac1{\sqrt{n}}\exp(-\frac{\pi-\delta}2n))$. This latter rate is by far the best among all regularization methods for the Shannon series. However, their regularized function involves the solving of a linear system and is implicit and more complicated. The main objective of this note is to show that the Gaussian regularized Shannon series can also achieve the same best convergence rate as that by C. Micchelli {\it et al}. We also show that the Gaussian regularization method can improve the convergence rate for the useful average sampling. Numerical experiments are presented to justify the obtained results.
\end{abstract}

\begin{IEEEkeywords}
Bandlimited functions, Gaussian regularization, oversampling, Shannon's sampling theorem, average sampling.
\end{IEEEkeywords}

%

\section{Introduction}
%
%
%
%
\IEEEPARstart{T}{he} main purpose of this paper is to show that the Gaussian regularized Shannon sampling formula to reconstruct a bandlimited function can achieve by far the best convergence rate among all regularization methods for the Shannon sampling series in the literature.
As a result, we improve L. Qian's error estimate \cite{Qian} for this highly successful method in engineering.

We first introduce the {\it Paley-Wiener space} $\cB_\delta(\bR^d)$ with the bandwidth $\delta:=(\delta_1,\delta_2,\dots,\delta_d)\in (0,+\infty)^d$ defined as
$$
\cB_\delta(\bR^d):=\{f\in L^2(\bR^d)\cap C(\bR^d):\supp\hat{f}\subseteq[-\delta,\delta]\},
$$
where $[-\delta,\delta]:=\prod_{k=1}^d[-\delta_k,\delta_k]$. In this paper, the Fourier transform of $f\in L^1(\bR^d)$ takes the form
$$
\hat{f}(\xi):=\frac{1}{(\sqrt{2\pi})^d}\int_{\bR^d} f(t)e^{-i t\cdot\xi}dt,\ \xi\in\bR^d,
$$
where $t\cdot\xi$ denotes the standard inner product of $t$ and $\xi$ in $\bR^d$. The classical {\it Shannon sampling theorem} \cite{Shannon, Whittaker} states that each $f\in \cB_\pi(\bR)$ can be completely reconstructed from its infinite sampling data $\{f(j):j\in\bZ\}$. Specifically, it holds
\begin{equation}\label{Shannonseries}
f(t)=\sum_{j\in\bZ}f(j)\sinc(t-j), \ t\in\bR, \ f\in\cB_\pi(\bR),
\end{equation}
where $\sinc(x)=\sin(\pi x)/(\pi x)$. Many generalizations of Shannon's sampling theorem have been established (see, for example, \cite{Aldroubi,Grochenig,Jerri,Hoffman,Qianthesis, Sun1,Sun2,Unser,Zhang09} and the references therein).

In practice, we can only sum over finite sample data ``near'' $t$ in (\ref{Shannonseries}) to approximate $f(t)$. Truncating the series (\ref{Shannonseries}) results in a convergence rate of the order $O(\frac{1}{\sqrt{n}})$ due to the slow decayness of the sinc function, \cite{Helms,Jagerman,Jordon,Tsybakov}. Besides, this truncated series was proved to be the optimal reconstruction method in the worst case scenario in $\cB_\pi(\bR)$.
Dramatic improvement of the convergence rate can be achieved when $f\in \cB_\delta(\bR)$, $\delta<\pi$.
In this case, $\{f(j):j\in\bZ\}$ turns out to be a set of oversampling data, where oversampling means to sample at a rate strictly larger than the Nyquist rate $\frac{\delta}{\pi}<1$.

Three explicit methods \cite{Qian,Jagerman,Micchelli} have been proposed in order to reconstruct a univariate bandlimited function $f\in\cB_\delta(\bR)$ ($\delta<\pi$) from its finite oversampling data $\{f(j):-n+1\le j\le n\}$ with an exponentially decaying approximation error. They work by multiplying the sinc function with a rapidly-decaying regularization function. Jagerman \cite{Jagerman} used a power of the sinc functions as the regularizer and obtained the convergence rate of
$$
O\Big(\frac1n\exp(-\frac{\pi-\delta}en)\Big).
$$
Micchelli {\it et al.} \cite{Micchelli} chose a spline function as the regularizer and attained the convergence rate
\begin{equation}\label{bestrate}
O\Big(\frac1{\sqrt{n}}\exp(-\frac{\pi-\delta}2n)\Big),
\end{equation}
which is by far the best convergence rate among all regularization methods for the Shannon sampling series in reconstructing a bandlimited function. However, the spline regularization function in \cite{Micchelli} is implicit and involves the solving of a linear system. A third method using a Gaussian function as the regularizer was first proposed by Wei \cite{Wei98}. Precisely, the {\it Gaussian regularized Shannon sampling series} in \cite{Wei98} is defined as
\begin{equation}\label{GRSseries}
(S_{n,r}f)(t):=\sum_{j=-n+1}^n f(j)\sinc(t-j) e^{-\frac{(t-j)^2}{2r^2}},t\in(0,1),f\in\cB_\delta(\bR).
\end{equation}
A convergence analysis of the method was presented by Qian \cite{Qian,Qianthesis}. Taking a small computational mistake in (2.33) in \cite{Qian} into consideration and optimizing about the variance $r$ of the Gaussian function as done in \cite{Micchelli}, Qian actually established the following convergence rate for (\ref{GRSseries})
\begin{equation}\label{qianrate}
O\Big(\sqrt{n}\exp(-\frac{\pi-\delta}2n)\Big).
\end{equation}

Due to its simplicity and high accuracy (\ref{qianrate}), the Gaussian regularized Shannon sampling series (\ref{GRSseries}) has been widely applied to scientific and engineering computations. In fact, more than a hundred such papers have appeared (see \url{http://www.math.msu.edu/~wei/pub-sec.html} for the list, and \cite{Wei98,Wei00,Wei000,Wei97,Wei08,Zhao} for comments and discussion).
We note that although the exponential term in (\ref{qianrate}) is as good as that in (\ref{bestrate}) for the spline function regularization \cite{Micchelli}, the first term $\sqrt{n}$ is much worse than $\frac1{\sqrt{n}}$ in (\ref{bestrate}). The first purpose of this paper is to show that the convergence rate of the Gaussian regularized Shannon sampling series can be improved to (\ref{bestrate}).
Thus, this method enjoys both simplicity and the best convergence rate by far. This will be done in Section II, where we also improve the convergence rates of the Gaussian regularized Shannon sampling series for derivatives and multivariate bandlimited functions.
In Section III, we show that the Gaussian regularization method can also improve the convergence rate for the useful average sampling. In the last section, we demonstrate our results via several numerical experiments.

\section{Improved Convergence Rates for the Gaussian Regularization}

In this section, we improve the convergence rate analysis for the Gaussian regularized Shannon sampling series. Specifically, we shall show that it can achieve by far the best rate (\ref{bestrate}) for univariate bandlimited functions and its derivatives, and for multivariate bandlimited functions. We separate the three cases into different subsections.

\subsection{Univariate Bandlimited Functions}

Let $f\in\cB_\delta(\bR)$ with $\delta\in(0,\pi)$ throughout this subsection. We shall use the Gaussian regularized Shannon sampling series (\ref{GRSseries}) to reconstruct the values of $f$ at $t\in(0,1)$ from the finite localized oversampling data $\{f(j):-n+1\le j\le n\}$. We shall need a few technical facts in order to improve the convergence rate established in \cite{Qian}. They were also frequently used in the estimates in \cite{Qian,Qianthesis}.

Firstly, it is well-known that $\sinc(\cdot-j)$, $j\in\bZ$ form an orthonormal basis for $\cB_\pi(\bR)$. As $f\in\cB_\delta(\bR)\subseteq\cB_\pi(\bR)$, we have by the Parseval identity
\begin{equation}\label{Parseval}
\|f\|^2_{L^2(\bR)}=\sum_{j\in\bZ}|f(j)|^2,\ \ f\in\cB_\delta(\bR).
\end{equation}
The second result needed is the following upper bound estimate of Mills' ratio \cite{Pollak} of standard normal law:
\begin{equation}\label{lemma1}
\int_x^{+\infty} e^{-t^2}dt< \frac{e^{-x^2}}{2x},\ \ x>0.
\end{equation}
We shall also need a variant of its discrete version:
\begin{equation}\label{lemma2}
\sum_{j\notin(-n,n]}e^{-\frac{(t-j)^2}{r^2}}<\frac{r^2}{n-1}e^{-\frac{(n-1)^2}{r^2}}, r>0, n\ge2, t\in(0,1).
\end{equation}
The next two are a useful computation of the Fourier transform
\begin{equation}\label{lemma3}
\Big(\sinc(t-j) e^{-\frac{(t-j)^2}{2r^2}}\Big)\hat{\,}(\xi)=\frac{e^{-ij\xi}}{2\pi}\int_{\xi-\pi}^{\xi+\pi} re^{-\frac{r^2\eta^2}{2}}d\eta,\ \xi\in\bR
\end{equation}
and an associated estimate
\begin{equation}\label{lemma4}
\Big|1-\frac{1}{\sqrt{\pi}}\int_{\frac{(\xi-\pi)r}{\sqrt{2}}}^{\frac{(\xi+\pi)r}{\sqrt{2}}} e^{-\tau^2}d\tau\Big|\le  \sqrt{\frac{2}{\pi}}\frac{e^{-\frac{(\pi-\delta)^2r^2}{2}}}{(\pi-\delta)r}\mbox{ for all }\xi\in[-\delta,\delta].
\end{equation}
The last one is the upper bound estimate
\begin{equation}\label{lemma5}
|H_k(x)|\le (2x)^k,\ \ |x|\ge \frac{k}{2}
\end{equation}
for the $k$-th order Hermite polynomial defined as
$$
H_k(x):=k!\sum_{i=0}^{\lfloor \frac{k}{2}\rfloor}(-1)^i \frac{(2x)^{k-2i}}{i!(k-2i)!},\ \ x\in\bR.
$$

We are in a position to present an improved convergence rate analysis for the Gaussian regularized Shannon sampling series. We follow the methods in \cite{Qian} and emphasize that the improvement is achieved by applying the Cauchy-Schwartz inequality to obtain a better estimate for (2.5) in \cite{Qian}.

\begin{theorem}\label{Theorem1} Let $\delta\in (0,\pi)$ , $n\ge 2$, and choose $r:=\sqrt{\frac{n-1}{\pi-\delta}}$. The Gaussian regularized Shannon sampling series (\ref{GRSseries}) satisfies
\begin{equation}\label{mainrate}
\begin{array}{ll}
&\displaystyle{\sup_{f\in\cB_\delta(\bR),\ \|f\|_{L^2(\bR)}\le 1} \|f-S_{n,r}f\|_{L^\infty((0,1))}}\\
&\displaystyle{\le \left( \sqrt{2\delta}+\frac{1}{\sqrt{n}}\right) \frac{ e^{-\frac{(\pi-\delta)(n-1)}{2}}}{\pi\sqrt{(\pi-\delta)(n-1)}}.}
\end{array}
\end{equation}
\end{theorem}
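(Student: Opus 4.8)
The plan is to follow the splitting used in \cite{Qian}, writing the pointwise error as a \emph{regularization error} plus a \emph{truncation error}. Introduce the infinite Gaussian regularized series $(\tilde{S}_r f)(t):=\sum_{j\in\bZ}f(j)\sinc(t-j)e^{-(t-j)^2/(2r^2)}$, so that for $t\in(0,1)$
\[
f(t)-(S_{n,r}f)(t)=\big[f(t)-(\tilde{S}_r f)(t)\big]+\sum_{j\notin(-n,n]}f(j)\sinc(t-j)e^{-\frac{(t-j)^2}{2r^2}}.
\]
I would estimate the two pieces separately, showing that the first yields the $\sqrt{2\delta}$ term and the second the $1/\sqrt{n}$ term of (\ref{mainrate}), both carrying the common factor $e^{-(\pi-\delta)(n-1)/2}/\big(\pi\sqrt{(\pi-\delta)(n-1)}\big)$ once $r=\sqrt{(n-1)/(\pi-\delta)}$ is inserted at the end.

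For the truncation tail the decisive improvement over \cite{Qian} is to apply Cauchy--Schwarz instead of a term-by-term bound. Since $t\in(0,1)$, every tail index obeys $|t-j|>n$, hence $|\sinc(t-j)|<1/(\pi n)$; pulling this factor out and applying Cauchy--Schwarz to the remaining sum gives
\[
\Big|\sum_{j\notin(-n,n]}f(j)\sinc(t-j)e^{-\frac{(t-j)^2}{2r^2}}\Big|\le\frac1{\pi n}\Big(\sum_{j\in\bZ}|f(j)|^2\Big)^{1/2}\Big(\sum_{j\notin(-n,n]}e^{-\frac{(t-j)^2}{r^2}}\Big)^{1/2}.
\]
By the Parseval identity (\ref{Parseval}) the first factor is at most $\|f\|_{L^2(\bR)}\le 1$, and (\ref{lemma2}) bounds the second by $\frac{r}{\sqrt{n-1}}e^{-(n-1)^2/(2r^2)}$. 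Substituting $r=\sqrt{(n-1)/(\pi-\delta)}$ collapses the whole expression to $\frac{1}{\pi n\sqrt{\pi-\delta}}e^{-(\pi-\delta)(n-1)/2}$, which is dominated by the $1/\sqrt{n}$ term in (\ref{mainrate}) since $n^2\ge n(n-1)$.

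For the regularization error I would pass to the Fourier domain, where the oversampling hypothesis $\supp\hat f\subseteq[-\delta,\delta]$ is exploited. Writing $f(t)=\frac1{\sqrt{2\pi}}\int_{-\delta}^{\delta}\hat f(\xi)e^{it\xi}\,d\xi$ and inserting the same representation of each sample into $\tilde{S}_r f$, the transform computation (\ref{lemma3}) shows that $f-\tilde{S}_r f$ is an integral over $[-\delta,\delta]$ of $\hat f(\xi)$ against the deviation of the regularized reconstruction filter from the ideal one. The pointwise size of this deviation on $[-\delta,\delta]$ is precisely what (\ref{lemma4}) controls, namely at most $\frac1\pi\frac{e^{-(\pi-\delta)^2r^2/2}}{(\pi-\delta)r}$. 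Applying Cauchy--Schwarz on $[-\delta,\delta]$ then gives $\int_{-\delta}^{\delta}|\hat f(\xi)|\,d\xi\le\sqrt{2\delta}\,\|\hat f\|_{L^2(\bR)}\le\sqrt{2\delta}$, and with $r=\sqrt{(n-1)/(\pi-\delta)}$ the exponent becomes $(\pi-\delta)(n-1)/2$ while $(\pi-\delta)r=\sqrt{(\pi-\delta)(n-1)}$, producing exactly the $\sqrt{2\delta}$ term of (\ref{mainrate}).

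The main obstacle is this regularization step, where one must carefully account for the periodization of the sample spectrum: reconstructing from the integer samples folds the shifted copies $\hat f(\cdot-2\pi m)$, $m\ne0$, back into the passband, each weighted by the regularized kernel's transform (\ref{lemma3}) evaluated far from the origin. Crucially, these aliased contributions are of the \emph{same} exponential order $e^{-(\pi-\delta)^2r^2/2}$ as the main deviation, so a careless triangle-inequality estimate would inflate the constant multiplying $\sqrt{2\delta}$. The delicate point is therefore to combine the phase factors $1-e^{-2\pi i m t}$ with the Gaussian tail bound (Mills' ratio (\ref{lemma1}) underlying (\ref{lemma4})) sharply enough that the aliased terms are absorbed without worsening the constant. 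Once the filter deviation is bounded cleanly on $[-\delta,\delta]$, the choice $r=\sqrt{(n-1)/(\pi-\delta)}$ is exactly the one equating the Gaussian decay of the regularization error with the truncation rate, so both estimates share the single exponential $e^{-(\pi-\delta)(n-1)/2}$; adding them gives (\ref{mainrate}).
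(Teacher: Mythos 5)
Your proposal follows essentially the same route as the paper's proof: the same splitting into the regularization error $E_1$ and the truncation tail $E_2$, the same decisive improvement of bounding the tail by $\frac{1}{\pi n}$ times a Cauchy--Schwarz product controlled via the Parseval identity (\ref{Parseval}) and (\ref{lemma2}), and the same Fourier-domain treatment of $E_1$ through (\ref{lemma3})--(\ref{lemma4}) with $\int_{-\delta}^{\delta}|\hat f(\xi)|\,d\xi\le\sqrt{2\delta}$. The aliasing subtlety you single out as the main obstacle is precisely the part the paper also leaves implicit, disposing of it by ``similar arguments as those in \cite{Qian}'' to arrive at the bound $|E_1(t)|\le \sqrt{2\delta}\,e^{-(\pi-\delta)^2r^2/2}/\bigl(\pi(\pi-\delta)r\bigr)$.
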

\begin{proof}  Let $f\in\cB_\delta(\bR)$ with $\|f\|_{L^2(\bR)}\le 1$. Set $(f-S_{n,r}f)(t):=E_1(t)+E_2(t)$, $t\in(0,1)$, where
$$
\begin{array}{ll}
&\displaystyle{E_1(t):=f(t)-\sum_{j\in\bZ} f(j)\sinc(t-j)e^{-\frac{(t-j)^2}{2r^2}},}\\
&\displaystyle{E_2(t):=\sum_{j\notin(-n,n]} f(j)\sinc(t-j) e^{-\frac{(t-j)^2}{2r^2}}.}
\end{array}
$$
Bound $E_1(t)$ by its Fourier transform as follows
$$
|E_1(t)|\le \frac1{\sqrt{2\pi}}\|\hat{E_1}\|_{L^1(\bR)},\ \ t\in(0,1).
$$
Computing and bounding $\hat{E_1}$ by (\ref{lemma3}), (\ref{lemma4}) and similar arguments as those in \cite{Qian}, we have
\begin{equation}\label{E11}
|E_1(t)|\le  \frac{\sqrt{2\delta}e^{-\frac{(\pi-\delta)^2r^2}{2}}}{\pi (\pi-\delta)r},\ t\in(0,1).
\end{equation}
Observe that
$$
|\sinc(t-j)|\le \frac1{\pi n},\ t\in(0,1),\ j\notin(-n,n].
$$
Thus,
$$
\begin{array}{ll}
|E_2(t)|
&\displaystyle{\le \sum_{j\notin(-n,n]} |f(j)|\left|\frac{\sin\pi(t-j)}{\pi(t-j)}\right| e^{-\frac{(t-j)^2}{2r^2}}}\\
&\displaystyle{\le \frac{1}{\pi n}\sum_{j\notin(-n,n]} |f(j)| e^{-\frac{(t-j)^2}{2r^2}}, \ t\in(0,1).}
\end{array}
$$
Apply the Cauchy-Schwartz inequality, we get by (\ref{Parseval}) and (\ref{lemma2})
\begin{equation}\label{E12}
\begin{array}{ll}
|E_2(t)|
&\displaystyle{\le  \frac{1}{\pi n}\Big(\sum_{j\notin(-n,n]} |f(j)|^2\Big)^{\frac12}\Big(\sum_{j\notin(-n,n]}e^{-\frac{(t-j)^2}{r^2}} \Big)^{\frac12}}\\
&\displaystyle{\le \frac{re^{-\frac{(n-1)^2}{2r^2}}}{\pi n\sqrt{n-1}}, \ t\in(0,1).}
\end{array}
\end{equation}
Combining (\ref{E11}) with (\ref{E12}) and optimally choosing $r=\sqrt{\frac{n-1}{\pi-\delta}}$ completes the proof.
\end{proof}

We remark that the estimate (\ref{mainrate}) is of the same order as the best convergence rate (\ref{bestrate}) by far in the literature. A second remark is on the degenerated case when $\delta=\pi$. In this case, the estimate in \cite{Qian} or the above (\ref{mainrate}) is apparently meaningless. To make up for the drawback, a more delicate upper bound estimate is needed for $E_1$. Specifically, we have
$$
\begin{array}{ll}
|E_1(t)|
&\displaystyle{\le \frac{1}{\sqrt{2\pi}}\Big(\int_{[-\pi,\pi]\setminus[-\pi+n^{-\frac{3}{4}},\pi-n^{-\frac{3}{4}}]} |\hat{E}_1(\xi)|d\xi\Big)}\\
&\displaystyle{\quad + \frac{1}{\sqrt{2\pi}}\int_{-\pi+n^{-\frac{3}{4}}}^{\pi-n^{-\frac{3}{4}}} |\hat{E}_1(\xi)|d\xi}\\
&\displaystyle{\le \frac{1}{\sqrt{\pi}}\frac{1}{n^{\frac38}}+  \sqrt{\frac{2}{\pi}} \frac{n^{\frac34}}{r}.}
\end{array}
$$
Taking $r=n^{\frac{9}{8}}$ above and using (\ref{E12}), we obtain the convergence rate $3n^{-\frac38}$ for $f\in\cB_\pi(\bR)$.

\subsection{Derivatives of Univariate Bandlimited Functions}

By the Paley-Wiener theorem, each function in $\cB_\delta(\bR)$ is infinitely differentiable. In this subsection, we are concerned with the reconstruction of the derivatives of $f\in \cB_\delta(\bR)$ by the Gaussian regularized Shannon sampling series (\ref{GRSseries}). The convergence rate obtained in \cite{Qian,Qianthesis} is also of the order (\ref{qianrate}). We shall improve the estimate.
\begin{theorem}\label{Theorem2}
Let $s\in\bN$, $n\ge \max\{3,\frac{s^2}{2(\pi-\delta)}\}$, and $r:=\sqrt{\frac{n-2}{\pi-\delta}}$. It holds
$$
\begin{array}{ll}
&\displaystyle{\sup_{f\in\cB_\delta(\bR),\|f\|_{L^2(\bR)}\le 1}\|f^{(s)}-(S_{n,r}f)^{(s)}\|_{L^\infty((0,1))}}\\
&\displaystyle{\le
\left( \frac{\sqrt{2}\delta^{s+\frac12}}{\sqrt{2s+1}} +\frac{24(s+2)!}{\sqrt{n}}   \right)\frac{e^{-\frac{(\pi-\delta)(n-2)}{2}}}{\pi\sqrt{(\pi-\delta)(n-2)}}.}
\end{array}
$$
\end{theorem}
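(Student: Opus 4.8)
The plan is to mirror the proof of Theorem~\ref{Theorem1}, writing $f^{(s)}-(S_{n,r}f)^{(s)}=E_1^{(s)}+E_2^{(s)}$ with exactly the same $E_1,E_2$ as there, and estimating the two pieces separately. Here $S_{n,r}f$ is a finite sum, so its derivatives are unproblematic; for $E_1$ one differentiates under the Fourier integral (legitimate because $\hat E_1$ has a Gaussian tail, so $|\xi|^s\hat E_1\in L^1(\bR)$), and for $E_2$ one differentiates the Gaussian-decaying series termwise.

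For $E_1^{(s)}$ I would pass to the Fourier side as in Theorem~\ref{Theorem1}. Differentiation inserts the weight $(i\xi)^s$, so $|E_1^{(s)}(t)|\le \frac1{\sqrt{2\pi}}\int_{\bR}|\xi|^s|\hat E_1(\xi)|\,d\xi$. Writing $G(\xi)$ for the integral factor in (\ref{lemma3}), the formulas there show that $\hat E_1$ is supported on $[-\delta,\delta]\cup\{|\xi|\ge 2\pi-\delta\}$, with $\hat E_1(\xi)=\hat f(\xi)\big(1-\sqrt{2\pi}\,G(\xi)\big)$ on $[-\delta,\delta]$. On $[-\delta,\delta]$ I would use (\ref{lemma4}) together with the Cauchy--Schwartz inequality against the new weight:
\[
\int_{-\delta}^\delta |\xi|^s|\hat f(\xi)|\,d\xi\le \Big(\int_{-\delta}^\delta|\xi|^{2s}d\xi\Big)^{1/2}\|\hat f\|_{L^2(\bR)}\le \Big(\tfrac{2\delta^{2s+1}}{2s+1}\Big)^{1/2}.
\]
Since $\frac{e^{-(\pi-\delta)^2r^2/2}}{(\pi-\delta)r}=\frac{e^{-(\pi-\delta)(n-2)/2}}{\sqrt{(\pi-\delta)(n-2)}}$ for the chosen $r$, this produces exactly the factor $\frac{\sqrt2\,\delta^{s+1/2}}{\sqrt{2s+1}}$ in the claimed bound. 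The high-frequency part $|\xi|\ge 2\pi-\delta$, where $\hat E_1=-G\,\widehat{(\cdots)}$ is exponentially small in $r$, is handled as in \cite{Qian} and only feeds into the lower-order $O(1/\sqrt n)$ term.

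The substance of the proof, and where I expect the difficulty, is $E_2^{(s)}$. Applying the Leibniz rule to each summand gives
\[
\frac{d^s}{dt^s}\Big(\sinc(t-j)\,e^{-\frac{(t-j)^2}{2r^2}}\Big)=\sum_{k=0}^s\binom sk \Big(\frac{d^{s-k}}{dt^{s-k}}\sinc(t-j)\Big)\Big(\frac{d^k}{dt^k}e^{-\frac{(t-j)^2}{2r^2}}\Big).
\]
For the Gaussian factor I would use that its $k$-th derivative is $(-1)^k(r\sqrt2)^{-k}H_k\!\big(\tfrac{t-j}{r\sqrt2}\big)e^{-(t-j)^2/2r^2}$ and invoke (\ref{lemma5}); this is precisely where the hypothesis $n\ge\frac{s^2}{2(\pi-\delta)}$ enters, since it forces $\big|\tfrac{t-j}{r\sqrt2}\big|\ge \frac{n-1}{r\sqrt2}\ge\frac s2\ge\frac k2$ for all $j\notin(-n,n]$ and $0\le k\le s$, so that (\ref{lemma5}) applies to every Hermite polynomial occurring. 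The bound then collapses to $\big|\frac{d^k}{dt^k}e^{-(t-j)^2/2r^2}\big|\le \frac{|t-j|^k}{r^{2k}}e^{-(t-j)^2/2r^2}$. For the sinc factor I would use $\big|\frac{d^m}{dx^m}\sinc(x)\big|\le C_m/|x|$ for $|x|\ge1$, obtained from the Leibniz expansion of $\frac1{\pi x}\sin\pi x$ (dominant piece $\pi^{m-1}/|x|$); for $m=0$ this recovers $|\sinc(x)|\le\frac1{\pi|x|}$ used in Theorem~\ref{Theorem1}.

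Feeding these back, each summand carries $|t-j|^{k-1}e^{-(t-j)^2/2r^2}$; after the Cauchy--Schwartz step (using (\ref{Parseval}) and $\|f\|_{L^2}\le1$) the $j$-sum reduces to $\big(\sum_{j\notin(-n,n]}|t-j|^{2k-2}e^{-(t-j)^2/r^2}\big)^{1/2}$, a polynomial-weighted analogue of (\ref{lemma2}). I would estimate this by comparison with $\int_{n-2}^\infty x^{2k-2}e^{-x^2/r^2}dx$ --- legitimate because the choice of $r$ and the hypothesis on $n$ place $[n-2,\infty)$ in the decreasing range of $x^{2k-2}e^{-x^2/r^2}$ --- and then bound the integral by splitting the exponent and applying Mills' ratio (\ref{lemma1}); with $r=\sqrt{(n-2)/(\pi-\delta)}$ this yields the exponential $e^{-(\pi-\delta)(n-2)/2}$ times $(n-2)^{k-3/2}r$. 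The hard part will be the bookkeeping: one must sum $\sum_{k=0}^s\binom sk C_{s-k}\,r^{1-2k}(n-2)^{k-3/2}$, track the cancellations forced by $r^2=(n-2)/(\pi-\delta)$, and show the total is dominated by $\frac{24(s+2)!}{\sqrt n}\cdot\frac{1}{\pi\sqrt{(\pi-\delta)(n-2)}}$, the factorial growth arising jointly from the binomial coefficients, the constants $C_{s-k}$, and the series $\sum_k(\pi-\delta)^k/k!$. Combining the $E_1^{(s)}$ and $E_2^{(s)}$ estimates and inserting $r=\sqrt{(n-2)/(\pi-\delta)}$ then gives the stated bound.
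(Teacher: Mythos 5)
Your plan follows the paper's proof in all essentials: the same splitting into an aliasing error $\bE_1$ and a truncation error $\bE_2$, the same Fourier-side bound for $\bE_1$ with the weight $|\xi|^s$ and Cauchy--Schwartz over $[-\delta,\delta]$ (the paper simply quotes this step as (\ref{bE1}) from \cite{Qian}, and your reconstruction of it, including the origin of the factor $\sqrt{2}\delta^{s+1/2}/\sqrt{2s+1}$, is correct), and for $\bE_2$ the same Leibniz expansion, the Hermite bound (\ref{lemma5}) justified by $n\ge s^2/(2(\pi-\delta))$, and Cauchy--Schwartz against (\ref{Parseval}). The one place you genuinely diverge is the polynomial-weighted Gaussian tail sum $\sum_{j\notin(-n,n]}|t-j|^{2k-2}e^{-(t-j)^2/r^2}$: you propose an integral comparison plus Mills' ratio (\ref{lemma1}), which works but needs the monotonicity check you mention, whereas the paper sidesteps the weighted sum entirely via the absorption $\frac{|j|^k}{k!\,r^{2k}}\le e^{|j|/r^2}$ (together with $\sum_l \pi^{s-k-l}/(s-k-l)!\le (s-k+1)e^\pi$), which shifts the Gaussian exponent to $e^{(-j^2+2|j|)/(2r^2)}$ and lets the plain unweighted (\ref{lemma2}) finish the job; this is exactly where the factor $(s+2)!$ and, via $e^{3\pi/2}/(2\sqrt{2\pi})\le 24$, the constant $24$ come from. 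Relatedly, the paper does not invoke a generic bound $|\sinc^{(m)}(x)|\le C_m/|x|$ but expands $\bigl(\sin\pi(t-j)/(\pi(t-j))\bigr)^{(s-k)}$ fully and applies $1/|t-j|^{l+1}\le 1/n$ termwise. The only unfinished business in your write-up is the final bookkeeping: your route would yield a constant of roughly the size $\sum_k\binom{s}{k}\pi^{s-k}(\pi-\delta)^k\le(2\pi)^s$, which is indeed dominated by $24(s+2)!$, so the plan closes, but as written the stated constant is asserted rather than derived.
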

\begin{proof} We may suppose that $f\in\cB_\delta(\bR)$ with $\|f\|_{L^2(\bR)}\le 1$. Set $f^{(s)}(t)-(S_{n,r}f)^{(s)}(t):=\bE_1(t)+\bE_2(t)$, $t\in(0,1)$, where
$$
\begin{array}{ll}
&\displaystyle{\bE_1(t):=f^{(s)}(t)-\frac{1}{\sqrt{2\pi}}\sum_{j\in\bZ}f(j)\Big(\sinc(t-j)e^{-\frac{(t-j)^2}{2r^2}}\Big)^{(s)},}\\
&\displaystyle{\bE_2(t):=\frac{1}{\sqrt{2\pi}}\sum_{j\notin(-n,n]}f(j)\Big(\sinc(t-j)e^{-\frac{(t-j)^2}{2r^2}}\Big)^{(s)}.}
\end{array}
$$
It has been estimated in \cite{Qian} that
\begin{equation}\label{bE1}
\|\bE_1\|_{L^\infty((0,1))} \le  \sqrt{\frac{2}{2s+1}}\frac{\delta^{s+\frac12} e^{-\frac{(\pi-\delta)^2r^2}{2}}}{\pi(\pi-\delta)r}.
\end{equation}
For each $t\in(0,1)$, we calculate
$$
\begin{array}{ll}
\displaystyle{\bE_2(t)}
&\displaystyle{=\frac{1}{\sqrt{2\pi}}\sum_{j\notin(-n,n]}f(j)}\\
&\displaystyle{\quad\cdot\left[\sum_{k=0}^s \frac{s!}{k!(s-k)!} \Big(\frac{\sin (\pi t-j\pi)}{\pi(t-j)}\Big)^{(s-k)}\Big(e^{-\frac{(t-j)^2}{2r^2}}\Big)^{(k)} \right]}\\
&\displaystyle{=\frac{1}{\sqrt{2\pi}}\sum_{j\notin(-n,n]}f(j)\left[\sum_{k=0}^s \frac{s!}{k!}\right.}\\
&\displaystyle{\quad\cdot \Biggl(\sum_{l=0}^{s-k}\frac{\pi^{s-k-l-1}\sin\Big(\pi(t-j+\frac{(s-k-l)}{2})\Big)}{l!(s-k-l)!}\cdot\frac{(-1)^l l!}{(t-j)^{l+1}}\Biggr)}\\
&\displaystyle{\quad\cdot\left.\frac{(-1)^k H_k(\frac{t-j}{\sqrt{2}r})}{(\sqrt{2}r)^k}\right] e^{-\frac{(t-j)^2}{2r^2}}.}\\
\end{array}
$$
Noticing the simple fact
\begin{equation}\label{Simplefact}
\frac{1}{|t-j|^{l+1}} \le \frac1n \mbox{ for }j\notin(-n,n],\ t\in(0,1),\ l\in\bZ_+,
\end{equation}
we obtain by (\ref{lemma5})
$$
\begin{array}{ll}
\displaystyle{|\bE_2(t)|}
&\displaystyle{\le\frac{s!}{\pi\sqrt{2\pi}}\sum_{j\notin(-n,n]}|f(j)| }\\
&\displaystyle{\quad\cdot\left[\sum_{k=0}^s\frac{1}{k!}\Big(\sum_{l=0}^{s-k}\frac{\pi^{s-k-l}}{(s-k-l)!}\frac{1}{|t-j|^{l+1}}\Big)\frac{|t-j|^k}{r^{2k}}\right] e^{-\frac{(t-j)^2}{2r^2}}}\\
&\displaystyle{\le\frac{s!}{n\pi\sqrt{2\pi}}\sum_{|j|\ge n}|f(j+1)|}\\
&\displaystyle{\quad\cdot\left[\sum_{k=0}^s\frac{|j|^k}{k!r^{2k}}\cdot\Big(\sum_{l=0}^{s-k}\frac{\pi^{s-k-l}}{(s-k-l)!}\Big)\right]e^{-\frac{j^2}{2r^2}}}.
\end{array}
$$
Since $e^x>\frac{x^j}{j!}$ for all $x>0$ and $j\in\bZ_+$, we have
$$
\begin{array}{ll}
|\bE_2(t)|
&\displaystyle{\le\frac{s!}{n\pi\sqrt{2\pi}}\sum_{|j|\ge n} |f(j+1)|\Big(\sum_{k=0}^s e^{\frac{|j|}{r^2}}(s-k+1)e^\pi\Big)e^{-\frac{j^2}{2r^2}}}\\
&\displaystyle{\le\frac{e^\pi (s+2)!}{n(2\pi)^\frac32}\sum_{|j|\ge n} |f(j+1)|e^{\frac{-j^2+2|j|}{2r^2}}.}\\
\end{array}
$$
Note that $n\ge 1+\frac{s^2}{2(\pi-\delta)}$ and $r=\sqrt{\frac{n-2}{\pi-\delta}}$ implies $n\ge \frac{sr}{\sqrt{2}}$. We thus get by (\ref{Parseval}), (\ref{lemma2}) and the Cauchy-Schwartz inequality
\begin{equation}\label{bE2}
\begin{array}{ll}
\|\bE_2(t)\|_{L^\infty((0,1))}
&\displaystyle{\le\frac{e^\pi (s+2)!e^{\frac{1}{2r^2}}}{n(2\pi)^\frac32}\Big(2\sum_{j\ge n-1} e^{-\frac{j^2}{r^2}}\Big)^{\frac12}}\\
&\displaystyle{\le \frac{e^\pi (s+2)!e^{\frac{1}{2r^2}}}{(2\pi)^\frac32}\frac{re^{-\frac{(n-2)^2}{2r^2}}}{n\sqrt{n-2}}.}
\end{array}
\end{equation}
Substituting $r=\sqrt{\frac{n-2}{\pi-\delta}}$ into (\ref{bE1}) and (\ref{bE2}), we have
$$
\begin{array}{ll}
&\displaystyle{\|\bE_2\|_{L^\infty((0,1))}+\|\bE_2\|_{L^\infty((0,1))}}\\
&\displaystyle{=\Big( \frac{\sqrt{2}}{\pi\sqrt{2s+1}} \frac{\delta^{s+\frac12}}{\sqrt{(\pi-\delta)(n-2)}} +\frac{e^\pi (s+2)!e^{\frac{\pi-\delta}{2(n-2)}}}{(2\pi)^{\frac32}\sqrt{\pi-\delta}n}  \Big)}\\
&\displaystyle{\quad\cdot e^{-\frac{(\pi-\delta)(n-2)}{2}} }\\
&\displaystyle{<\Big( \frac{\sqrt{2}\delta^{s+\frac12}}{\sqrt{2s+1}} +\frac{24(s+2)!}{\sqrt{n}}   \Big)\frac{e^{-\frac{(\pi-\delta)(n-2)}{2}}}{\pi\sqrt{(\pi-\delta)(n-2)}} }.
\end{array}
$$
In the last inequality, we use $r=\sqrt{\frac{n-2}{\pi-\delta}}\ge \frac{1}{\sqrt{\pi}}$ and $\frac{e^{\frac{3\pi}{2}}}{2\sqrt{2\pi}}\le 24$. The proof is complete.
\end{proof}

\subsection{Multivariate Bandlimited Functions}

In this subsection, we show that the best convergence rate (\ref{bestrate}) can also be achieved in the reconstruction of a multivariate bandlimited function by the Gaussian regularized Shannon sampling series. Techniques to be used for this case is quite different from those in the univariate case and also much differs from those in \cite{Qianthesis}.

Let $\delta:=(\delta_1,\delta_2,\dots,\delta_d)\in(0,\pi)^d$ and $J_n:=\{j\in\bZ^d: j\in(-n,n]^d\}$ in this subsection. The Gaussian regularized Shannon sampling series \cite{Qianthesis} to reconstruct a multivariate function $f\in \cB_\delta(\bR^d)$ from its finite sample data $f(J_n)$ is defined as
\begin{equation}\label{multiLocalsampling}
(\cS_{n,r}f)(t):=\sum_{j\in J_n} f(j)\sinc(t-j)e^{-\frac{\|t-j\|^2}{2r^2}},t\in(0,1)^d,
\end{equation}
where
$$
\sinc(x):=\prod_{k=1}^d \frac{\sin(\pi x_k)}{\pi x_k}, \ \ x\in\bR^d
$$
and $\|x\|$ denotes the standard Euclidean norm of $x$ in $\bR^d$.

Two lemmas are needed to present an improved convergence analysis for (\ref{multiLocalsampling}).

\begin{lemma}\label{lemma6} Let $r>0$ and $n\ge2$. It holds for all $t\in(0,1)^d$
$$
\sum_{j\notin J_n} \Big(\sinc^2(t-j)e^{-\frac{\|t-j\|^2}{r^2}}\Big)\le \frac{d}{\pi^2} \frac{r^2e^{-\frac{(n-1)^2}{r^2}}}{n^2(n-1)}.
$$
\end{lemma}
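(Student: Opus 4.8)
The plan is to exploit the tensor-product structure of both the squared sinc function and the Gaussian. Since
$$
\sinc^2(t-j)\,e^{-\frac{\|t-j\|^2}{r^2}}=\prod_{k=1}^d g_k(j_k),\qquad g_k(m):=\frac{\sin^2(\pi(t_k-m))}{\pi^2(t_k-m)^2}\,e^{-\frac{(t_k-m)^2}{r^2}},
$$
the summand factorizes coordinatewise, so the $d$-dimensional sum reduces to one-dimensional sums of $g_k$. First I would rewrite the index set: because $j\notin J_n$ means precisely that $j_k\notin(-n,n]$ for at least one coordinate $k$, the set $\{j\notin J_n\}$ is the union $\bigcup_{k=1}^d A_k$ with $A_k:=\{j\in\bZ^d:\,j_k\notin(-n,n]\}$. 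As every summand is nonnegative, subadditivity over this union yields
$$
\sum_{j\notin J_n}\prod_{m=1}^d g_m(j_m)\le \sum_{k=1}^d\Big(\sum_{j_k\notin(-n,n]}g_k(j_k)\Big)\prod_{\ell\ne k}\Big(\sum_{j_\ell\in\bZ}g_\ell(j_\ell)\Big).
$$

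Second, I would establish two one-dimensional estimates. For the \emph{full} sum $\sum_{m\in\bZ}g_\ell(m)$, I would discard the Gaussian factor ($e^{-(t_\ell-m)^2/r^2}\le1$) and invoke the identity $\sum_{m\in\bZ}\sinc^2(t_\ell-m)=1$, valid for every $t_\ell$: this follows since $\{\sinc(\cdot-m)\}$ is an orthonormal basis of $\cB_\pi(\bR)$ and the function $s\mapsto\sinc(t_\ell-s)$ lies in $\cB_\pi(\bR)$ with unit $L^2$-norm and integer samples $\sinc(t_\ell-m)$, so Parseval (\ref{Parseval}) gives the claim. Hence each factor with $\ell\ne k$ is at most $1$. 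For the \emph{tail} sum I would note that $t_k\in(0,1)$ and $m\notin(-n,n]$ force $|t_k-m|>n$, whence $\frac{1}{\pi^2(t_k-m)^2}<\frac{1}{\pi^2n^2}$; bounding $\sin^2\le1$ and applying the discrete Mills-ratio estimate (\ref{lemma2}) gives
$$
\sum_{m\notin(-n,n]}g_k(m)<\frac{1}{\pi^2n^2}\sum_{m\notin(-n,n]}e^{-\frac{(t_k-m)^2}{r^2}}<\frac{1}{\pi^2n^2}\cdot\frac{r^2}{n-1}e^{-\frac{(n-1)^2}{r^2}}.
$$

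Substituting these bounds into the factored sum gives $\sum_{j\in A_k}\prod_m g_m(j_m)\le \frac{r^2 e^{-(n-1)^2/r^2}}{\pi^2 n^2(n-1)}$ for each $k$, and summing over $k=1,\dots,d$ produces exactly the asserted factor $\frac{d}{\pi^2}\frac{r^2e^{-(n-1)^2/r^2}}{n^2(n-1)}$. The only nontrivial ingredients are the factorization-plus-union-bound reduction and the identity $\sum_m\sinc^2(t-m)=1$; the latter is where the sharp constant originates, since a cruder bound on the $d-1$ free coordinates would inflate it. I expect the main obstacle to be purely organizational: keeping the ``one bad coordinate'' structure straight so that the subadditivity step is applied correctly and no multiplicative constant exceeding $1$ slips into the free coordinates.
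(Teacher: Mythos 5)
Your proof is correct and follows essentially the same route as the paper's: the union bound over the ``one bad coordinate'' sets, the identity $\sum_{m\in\bZ}\sinc^2(x-m)=1$ to dispose of the free coordinates, and the bound $\sinc^2(t_k-j_k)\le\frac{1}{\pi^2 n^2}$ combined with (\ref{lemma2}) for the tail coordinate. The only difference is that you supply a Parseval justification for the sinc identity, which the paper simply states.
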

\begin{proof}
Observe that
\begin{equation}\label{Subset}
\{j\in\bZ^d: j\notin J_n\}\subseteq \bigcup_{k=1}^d \{j\in\bZ^d: j_k\notin (-n,n]\}.
\end{equation}
By (\ref{Subset}), we have for each $t\in(0,1)^d$
$$
\begin{array}{ll}
&\displaystyle{\sum_{j\notin J_n}\Big(\sinc^2(t-j)e^{-\frac{\|t-j\|^2}{r^2}}\Big)}\\
&\displaystyle{\le \sum_{k=1}^d\Big(\sum_{j_k\notin (-n,n]} \sinc^2(t_k-j_k)e^{-\frac{(t_k-j_k)^2}{r^2}}\Big) }\\
&\displaystyle{\quad\cdot\prod_{l\ne k}\Big(\sum_{j_l\in\bZ}\sinc^2(t_l-j_l)e^{-\frac{(t_l-j_l)^2}{r^2}}\Big).}
\end{array}
$$
Note that $\sum_{m\in\bZ}\sinc^2(x-m)=1$, $x\in\bR$. Thus, we have
$$
\begin{array}{ll}
&\displaystyle{\sum_{j\notin J_n}\prod_{k=1}^d \sinc^2(t_k-j_k)e^{-\frac{(t_k-j_k)^2}{r^2}}}\\
&\displaystyle{\le \sum_{k=1}^d\Big(\sum_{j_k\notin (-n,n]} \sinc^2(t_k-j_k)e^{-\frac{(t_k-j_k)^2}{r^2}}\Big).}\\
\end{array}
$$
Applying (\ref{lemma2}) to the last inequality gives the desired result.
\end{proof}

Introduce the important constant
$$
\Delta:=\max\{\delta_k:k=1,2,\dots,d\}.
$$
\begin{lemma}\label{lemma8} Let $\delta\in(0,\pi)^d$. It holds for $\xi\in\prod_{k=1}^d[-\delta_k,\delta_k]$
$$
1-\prod_{k=1}^d\frac{1}{\sqrt{\pi}}\int_{\frac{(\xi_k-\pi)r}{\sqrt{2}}}^{\frac{(\xi_k+\pi)r}{\sqrt{2}}} e^{-\tau^2}d\tau
\le \sqrt{\frac{2}{\pi}} \frac{de^{-\frac{(\pi-\Delta)^2r^2}{2}}}{(\pi-\Delta)r}.
$$
\end{lemma}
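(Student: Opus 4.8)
The plan is to reduce the multivariate claim of Lemma~\ref{lemma8} to the univariate estimate (\ref{lemma4}) that has already been established. Write $a_k:=\frac{1}{\sqrt{\pi}}\int_{\frac{(\xi_k-\pi)r}{\sqrt{2}}}^{\frac{(\xi_k+\pi)r}{\sqrt{2}}} e^{-\tau^2}d\tau$ for $k=1,2,\dots,d$, so that the quantity to be bounded is $1-\prod_{k=1}^d a_k$. The first observation I would make is that each $a_k$ lies in $(0,1]$: positivity is clear since the integrand is positive, and $a_k\le \frac{1}{\sqrt{\pi}}\int_{-\infty}^{+\infty}e^{-\tau^2}d\tau=1$. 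Hence every factor is a number in the unit interval, which is exactly the setting in which a product can be controlled by the sum of the individual defects.

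The key algebraic step is the elementary telescoping inequality
\begin{equation}\label{telescope}
1-\prod_{k=1}^d a_k \le \sum_{k=1}^d (1-a_k),\quad 0\le a_k\le 1.
\end{equation}
This follows by induction on $d$: writing $1-\prod_{k=1}^{d}a_k=(1-\prod_{k=1}^{d-1}a_k)+(\prod_{k=1}^{d-1}a_k)(1-a_d)$ and using $\prod_{k=1}^{d-1}a_k\le 1$ to drop that factor gives the inductive step. I would state (\ref{telescope}) as the engine of the proof, since it converts the multivariate product into a sum of one-dimensional terms.

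With (\ref{telescope}) in hand, the remaining work is purely univariate. For each fixed $k$, the constraint $\xi\in\prod_{k=1}^d[-\delta_k,\delta_k]$ gives $\xi_k\in[-\delta_k,\delta_k]$, so estimate (\ref{lemma4}), applied with bandwidth $\delta_k$, yields
$$
1-a_k = \Big|1-\frac{1}{\sqrt{\pi}}\int_{\frac{(\xi_k-\pi)r}{\sqrt{2}}}^{\frac{(\xi_k+\pi)r}{\sqrt{2}}} e^{-\tau^2}d\tau\Big|\le \sqrt{\frac{2}{\pi}}\frac{e^{-\frac{(\pi-\delta_k)^2r^2}{2}}}{(\pi-\delta_k)r}.
$$
Summing over $k$ and invoking the definition $\Delta=\max_k\delta_k$, I would note that $\pi-\delta_k\ge \pi-\Delta>0$ for every $k$, so that both $e^{-\frac{(\pi-\delta_k)^2r^2}{2}}\le e^{-\frac{(\pi-\Delta)^2r^2}{2}}$ and $\frac{1}{(\pi-\delta_k)r}\le\frac{1}{(\pi-\Delta)r}$; thus each of the $d$ summands is dominated by the single term $\sqrt{\frac{2}{\pi}}\frac{e^{-(\pi-\Delta)^2r^2/2}}{(\pi-\Delta)r}$, and the sum is at most $d$ times this, which is exactly the asserted bound.

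The main obstacle is essentially bookkeeping rather than any deep difficulty: one must be careful that $1-a_k$ is genuinely nonnegative so that the absolute values in (\ref{lemma4}) can be stripped when summing, and that replacing each $\delta_k$ by $\Delta$ worsens the bound in the correct direction (both the exponential and the prefactor). Because $a_k\le 1$ and $\pi-\Delta$ is the smallest of the $\pi-\delta_k$, both monotonicities go the right way, so once (\ref{telescope}) is justified the estimate closes immediately. I would therefore present (\ref{telescope}) as the crux and treat the rest as a direct application of the already-proved one-dimensional inequality.
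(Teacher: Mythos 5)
Your proposal is correct and follows essentially the same route as the paper: the paper also reduces the product to a sum via the elementary inequality $1-\prod_{k=1}^d(1-\tau_k)\le\sum_{k=1}^d\tau_k$, bounds each one-dimensional defect (it re-derives the bound from the Mills' ratio estimate (\ref{lemma1}) rather than citing (\ref{lemma4}) directly, which is only a cosmetic difference), and then uses the monotonicity of $x\mapsto\frac{1}{x}e^{-r^2x^2/2}$ to replace each $\delta_k$ by $\Delta$. No gaps.
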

\begin{proof} By (\ref{lemma1}), we have for $\xi\in\prod_{k=1}^d[-\delta_k,\delta_k]$
$$
\begin{array}{ll}
&\displaystyle{\frac{1}{\sqrt{\pi}}\int_{\frac{(\xi_k-\pi)r}{\sqrt{2}}}^{\frac{(\xi_k+\pi)r}{\sqrt{2}}} e^{-\tau^2}d\tau}\\
&\displaystyle{=1-\frac{1}{\sqrt{\pi}}\int_{\frac{(\pi-\xi_k)r}{\sqrt{2}}}^{+\infty} e^{-\tau^2}d\tau-\frac{1}{\sqrt{\pi}}\int_{\frac{(\pi+\xi_k)r}{\sqrt{2}}}^{+\infty} e^{-\tau^2}d\tau}\\
&\displaystyle{\ge 1-\frac{1}{\sqrt{\pi}}\left(\frac{e^{-\frac{(\pi-\xi_k)^2r^2}{2}}}{\sqrt{2}(\pi-\xi_k)r}
+\frac{e^{-\frac{(\pi+\xi_k)^2r^2}{2}}}{\sqrt{2}(\pi+\xi_k)r}\right). }\\
\end{array}
$$
We then apply the elementary fact that for constants $\tau_k>0$, $1\le k\le d$ with $\sum_{k=1}^d \tau_k<1$, it holds
$$
1-\prod_{k=1}^d (1-\tau_k)\le\sum_{k=1}^d \tau_k.
$$
As a consequence,
$$
\begin{array}{ll}
&\displaystyle{1-\prod_{k=1}^d\frac{1}{\sqrt{\pi}}\int_{\frac{(\xi_k-\pi)r}{\sqrt{2}}}^{\frac{(\xi_k+\pi)r}{\sqrt{2}}} e^{-\tau^2}d\tau}\\
&\displaystyle{\le \frac{1}{\sqrt{\pi}}\sum_{k=1}^d\left(\frac{e^{-\frac{(\pi-\xi_k)^2r^2}{2}}}{\sqrt{2}(\pi-\xi_k)r}
+\frac{e^{-\frac{(\pi+\xi_k)^2r^2}{2}}}{\sqrt{2}(\pi+\xi_k)r}\right)}\\
&\displaystyle{\le  \sqrt{\frac{2}{\pi}}\sum_{k=1}^d\frac{e^{-\frac{(\pi-\xi_k)^2r^2}{2}}}{(\pi-\xi_k)r }.}
\end{array}
$$
The proof is completed by noting that $\frac1x e^{-\frac{r^2x^2}{2}}$ is decreasing on $(0,+\infty)$.
\end{proof}

With the above preparation, we have the following main theorem.
\begin{theorem}\label{Theorem3} Let $\delta\in (0,\pi)^d$, $n\ge2$, and $r:=\sqrt{\frac{n-1}{\pi-\Delta}}$. The multivariate Gaussian regularized Shannon sampling series (\ref{multiLocalsampling}) satisfies
$$
\begin{array}{ll}
&\displaystyle{\sup_{f\in\cB_\delta(\bR^d),\ \|f\|_{L^2(\bR^d)}\le 1} \|f-\cS_{n,r}f\|_{L^\infty((0,1)^d)}}\\
&\displaystyle{\le\left(  d(2\Delta)^{\frac{d}{2}}+\sqrt{\frac{d}{n}}\right) \frac{e^{-\frac{(\pi-\Delta)(n-1)}{2}}}{\pi\sqrt{(\pi-\Delta)(n-1)}}.}
\end{array}
$$
\end{theorem}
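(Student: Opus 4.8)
The plan is to follow the proof of Theorem~\ref{Theorem1}, decomposing the error into a regularization part and a localization part and dispatching the two with Lemmas~\ref{lemma8} and~\ref{lemma6}, respectively. Fix $f\in\cB_\delta(\bR^d)$ with $\|f\|_{L^2(\bR^d)}\le 1$ and set $(f-\cS_{n,r}f)(t)=E_1(t)+E_2(t)$ for $t\in(0,1)^d$, where
\[
E_1(t):=f(t)-\sum_{j\in\bZ^d}f(j)\sinc(t-j)e^{-\frac{\|t-j\|^2}{2r^2}},\qquad E_2(t):=\sum_{j\notin J_n}f(j)\sinc(t-j)e^{-\frac{\|t-j\|^2}{2r^2}}.
\]
Here $E_1$ is the error of the full, untruncated regularized series, to be treated on the Fourier side, and $E_2$ is the localization tail.

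First I would bound $E_1$ through its Fourier transform, $|E_1(t)|\le(\sqrt{2\pi})^{-d}\|\hat{E}_1\|_{L^1(\bR^d)}$. Since the regularized node function $\sinc(t-j)e^{-\|t-j\|^2/(2r^2)}$ factors as a product over the $d$ coordinates, so does its Fourier transform, each factor supplied by the univariate formula (\ref{lemma3}); thus the transform of the full series equals $\prod_{k=1}^d m_r(\xi_k)$ times the $2\pi\bZ^d$-periodization of $(\sqrt{2\pi})^d\hat f$, where $m_r(\eta):=\tfrac{1}{2\pi}\int_{\eta-\pi}^{\eta+\pi}re^{-r^2\zeta^2/2}d\zeta$. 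On the support $[-\delta,\delta]$ of $\hat f$ only the zeroth periodization term survives, so there $\hat{E}_1(\xi)=\hat f(\xi)\bigl(1-\prod_{k=1}^d\tfrac{1}{\sqrt\pi}\int_{(\xi_k-\pi)r/\sqrt2}^{(\xi_k+\pi)r/\sqrt2}e^{-\tau^2}d\tau\bigr)$, while outside $[-\pi,\pi]^d$ the factor $\prod_k m_r(\xi_k)$ is exponentially smaller than the main contribution and is absorbed exactly as in Qian's univariate argument. Applying the Cauchy--Schwartz inequality over the box $[-\delta,\delta]$, the multivariate counterpart of the Parseval identity (\ref{Parseval}) giving $\int_{\bR^d}|\hat f|^2=\|f\|_{L^2}^2\le1$, the $L^\infty$ bound of Lemma~\ref{lemma8}, and $|[-\delta,\delta]|=\prod_k 2\delta_k\le(2\Delta)^d$, I get
\[
|E_1(t)|\le(\sqrt{2\pi})^{-d}(2\Delta)^{d/2}\sqrt{\tfrac{2}{\pi}}\,\frac{d\,e^{-(\pi-\Delta)^2r^2/2}}{(\pi-\Delta)r}.
\]
Because $(2\pi)^{-d/2}\sqrt{2/\pi}=(2\pi)^{(1-d)/2}/\pi\le 1/\pi$ for every $d\ge1$, and $r=\sqrt{(n-1)/(\pi-\Delta)}$ turns $(\pi-\Delta)^2r^2/2$ into $(\pi-\Delta)(n-1)/2$ and $(\pi-\Delta)r$ into $\sqrt{(\pi-\Delta)(n-1)}$, this produces precisely the first summand $d(2\Delta)^{d/2}$ of the claimed bound.

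For the localization tail I would apply Cauchy--Schwartz directly, writing $|E_2(t)|\le\bigl(\sum_{j\notin J_n}|f(j)|^2\bigr)^{1/2}\bigl(\sum_{j\notin J_n}\sinc^2(t-j)e^{-\|t-j\|^2/r^2}\bigr)^{1/2}$. The first factor is at most $\|f\|_{L^2(\bR^d)}\le1$ by the multivariate Parseval identity (the tensor family $\sinc(\cdot-j)$, $j\in\bZ^d$, being orthonormal in $\cB_\pi(\bR^d)$), and the second is bounded by Lemma~\ref{lemma6}. Substituting $r=\sqrt{(n-1)/(\pi-\Delta)}$ and using $1/n\le 1/\sqrt{n(n-1)}$ yields $|E_2(t)|\le\sqrt{d/n}\,e^{-(\pi-\Delta)(n-1)/2}/\bigl(\pi\sqrt{(\pi-\Delta)(n-1)}\bigr)$, the second summand. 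Adding the two estimates and taking the supremum over $t\in(0,1)^d$ and over the unit ball of $\cB_\delta(\bR^d)$ finishes the argument.

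The main obstacle is the clean treatment of $E_1$. The delicate point is keeping the growth in $d$ only linear: a naive product estimate of $1-\prod_k\text{(symbol)}_k$ would be exponential in $d$, and it is precisely the linearization $1-\prod_k(1-\tau_k)\le\sum_k\tau_k$ underlying Lemma~\ref{lemma8} that repairs this. One must then organize the Cauchy--Schwartz step so that the box volume contributes exactly $(2\Delta)^{d/2}$ while the $(\sqrt{2\pi})^{-d}$ normalization collapses to the clean constant $1/\pi$. Controlling the exponentially negligible aliasing contribution from outside $[-\pi,\pi]^d$ is also required, but this is routine given the Gaussian decay of $m_r$.
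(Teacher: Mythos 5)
Your proposal is correct and follows essentially the same route as the paper: the identical decomposition into the untruncated-regularization error and the localization tail, with the first term handled on the Fourier side via Lemma~\ref{lemma8} and a Cauchy--Schwartz estimate over the box $[-\delta,\delta]$ yielding the factor $(2\Delta)^{d/2}$, and the second term handled by Cauchy--Schwartz with the multivariate Parseval identity and Lemma~\ref{lemma6}. The only difference is cosmetic: you spell out the aliasing/periodization argument that the paper delegates to Qian's univariate computation, and your constant bookkeeping ($(2\pi)^{-d/2}\sqrt{2/\pi}\le 1/\pi$ and $1/n\le 1/\sqrt{n(n-1)}$) matches the paper's.
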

\begin{proof} Let $f\in\cB_\delta(\bR^d)$ with $\|f\|_{L^2(\bR^d)}\le 1$. Set $f(t)-(S_{n,r}f)(t):=\cE_1(t)+\cE_2(t)$, $t\in(0,1)^d$,
where
$$
\begin{array}{ll}
&\displaystyle{\cE_1(t):=f(t)-\sum_{j\in\bZ^d} f(j)\sinc(t-j) e^{-\frac{\|t-j\|^2}{2r^2}},}\\
&\displaystyle{\cE_2(t):=\sum_{j\notin J_n} f(j) \sinc(t-j) e^{-\frac{\|t-j\|^2}{2r^2}}.}
\end{array}
$$
By Lemma \ref{lemma8} and similar arguments as those in \cite{Qian} for the univariate case, we have
$$
\begin{array}{ll}
|\hat{\cE}_1(\xi)|
&\displaystyle{= |\hat{f}(\xi)|\left|1-\prod_{k=1}^d\frac{1}{\sqrt{\pi}}\int_{\frac{(\xi_k-\pi)r}{\sqrt{2}}}^{\frac{(\xi_k+\pi)r}{\sqrt{2}}} e^{-\tau^2}d\tau\right|}\\
&\displaystyle{\le |\hat{f}(\xi)|\sqrt{\frac{2}{\pi}}\frac{de^{-\frac{(\pi-\Delta)^2r^2}{2}}}{(\pi-\Delta)r},\ \xi\in[-\delta,\delta].}
\end{array}
$$
Bounding $\cE_1$ by its Fourier transform and then applying the Cauchy-Schwartz inequality, we get
\begin{equation}\label{E231}
\begin{array}{ll}
\|\cE_1\|_{L^\infty((0,1)^d)}
&\displaystyle{\le \frac{1}{\sqrt{2\pi}}\int_{[-\delta,\delta]} |\hat{\cE}(\xi)|d\xi}\\
&\displaystyle{\le \frac{d}{\pi}\frac{e^{-\frac{(\pi-\Delta)^2r^2}{2}}}{(\pi-\Delta)r}\int_{[-\delta,\delta]}|\hat{f}(\xi)|d\xi}\\
&\displaystyle{\le \frac{d(2\Delta)^{\frac{d}{2}}}{\pi}\frac{e^{-\frac{(\pi-\Delta)^2r^2}{2}}}{(\pi-\Delta)r}.}
\end{array}
\end{equation}
Recall that
$$
\|f\|_{L^2(\bR^d)}^2=\sum_{j\in\bZ^d}|f(j)|^2,\ \ f\in\cB_\delta(\bR^d).
$$
By this, Lemma \ref{lemma6}, and the Cauchy-Schwartz inequality, we obtain
\begin{equation}\label{E232}
\begin{array}{ll}
\|\cE_2\|_{L^\infty((0,1)^d)}
&\displaystyle{\le \Big(\sum_{j\notin J_n} |f(j)|^2\Big)^{\frac12}}\\
&\displaystyle{\quad\cdot \Big(\sum_{j\notin J_n} \sinc^2(t-j) e^{-\frac{\|t-j\|^2}{r^2}} \Big)^{\frac12}}\\
&\displaystyle{\le \frac{\sqrt{d}}{\pi} \frac{re^{-\frac{(n-1)^2}{2r^2}}}{n\sqrt{n-1}}.}
\end{array}
\end{equation}
The proof is completed by taking $r=\sqrt{\frac{n-1}{\pi-\Delta}}$ in (\ref{E231}) and (\ref{E232}).
\end{proof}

\section{Gaussian Regularization for Average Sampling}

In this section, we will apply the method of Gaussian regularization to the useful average sampling. A main purpose is again to improve the convergence rate. We first introduce some basic facts about the average sampling.

Sampling a function $f$ at $j\in\bZ$ can be viewed as applying the Delta distribution to the function $f(j+\cdot)$. The Delta distribution is used theoretically but hard to implement physically. Hence, a practical way is to approximate the Delta distribution by an averaging function with small support around the origin. For this sake, we consider the following average sampling strategy:
$$
\tilde{f}(j):=\int_{-\sigma/2}^{\sigma/2}f(j+\cdot)d\nu(x),\ j\in\bZ,\ f\in\cB_\delta(\bR),
$$
where $0<\sigma<\frac{\pi}{\delta}$ and $\nu$ is a symmetric positive Borel probability measure on $[-\sigma/2,\sigma/2]$. It was observed in \cite{Zhang} that
\begin{equation}\label{Framebound}
\cos\Big(\frac{\sigma\delta}{2}\Big)\|f\|_{L^2(\bR)}^2\le\sum_{j\in\bZ}|\tilde{f}(j)|^2\le \|f\|_{L^2(\bR)}^2.
\end{equation}
Thus, $\{\tilde{f}(j):j\in\bZ\}$ is in fact induced by a frame in $\cB_\delta(\bR)$. By the standard frame theory, we are able to completely reconstruct $f$ from the infinite sample data $\{\tilde{f}(j):j\in\bZ\}$ through a dual frame. For studies along this direction, see, for example, \cite{Aldroubi,AST,Aldroubi2005,Grochenig,Sun1, Sun2}.

Motivated by the Shannon sampling theorem, we desire a dual frame that is generated by the shifts of a single function. In other words, we prefer a complete reconstruction formula of the following form
\begin{equation}\label{reconstruction2}
f=\frac1{\sqrt{2\pi}}\sum_{j\in\bZ}\tilde{f}(j)\phi(\cdot-j),\ \ f\in\cB_\delta(\bR),
\end{equation}
where $\phi$ is to be chosen. It has been proved in \cite{Zhang} that if $\phi\in C(\bR)\cap L^2(\bR)$ with $\supp\hat{\phi}\subseteq \bI_\delta:=[-2\pi+\delta,2\pi-\delta]$ then (\ref{reconstruction2}) holds if and only if
\begin{equation}\label{exactreconst}
\hat{\phi}(\xi)W(\xi)=1\mbox{ for almost every } \xi\in[-\delta,\delta],
\end{equation}
where
\begin{equation}\label{Wdefinition}
W(\xi):=\int_{-\sigma/2}^{\sigma/2} e^{it\xi}d\nu(t),\ \ \xi\in [-\delta,\delta].
\end{equation}
Under the above assumptions on $\nu$, $W(\xi)$ is an even function on $[-\delta,\delta]$ and satisfies
\begin{equation}\label{Wrange}
\begin{array}{ll}
&\displaystyle{0<\gamma:=\cos\Big(\frac{\sigma\delta}{2}\Big)\le W(\xi)\le 1,\ |W'(\xi)|\le \frac{\sigma}{2},}\\
&\displaystyle{\hspace{1.7cm} |W''(\xi)|\le \frac{\sigma^2}{4},\ \xi\in[-\delta,\delta].}
\end{array}
\end{equation}

In practice, we only have the finite sample data $\tilde{f}(j)$, $-n<j\le n$. A natural reconstruction method is by
\begin{equation}\label{truncation}
\frac1{\sqrt{2\pi}}\sum_{j=-n+1}^n\tilde{f}(j)\phi(\cdot-j).
\end{equation}
The main purpose of this section is to illustrate by an explicit example that compared to the above direct truncation, regularization by a Gaussian function as follows
\begin{equation}\label{averagesampling}
(\bS_{n,r} f)(t):=\frac1{\sqrt{2\pi}}\sum_{j\in(-n,n]} \tilde{f}(j)\phi(t-j)e^{-\frac{(t-j)^2}{2r^2}}, t\in(0,1), f\in\cB_\delta(\bR)
\end{equation}
may lead to a better convergence rate.

The function $\phi$ satisfying (\ref{exactreconst}) in our example is specified as
\begin{equation}\label{phi}
\hat{\phi}(\xi):=\left\{
\begin{array}{ll}
\frac{1}{W(\xi)}, & \xi\in[-\delta,\delta],\\
\left[\frac{|\xi|-(2\pi-\delta)}{2\pi-2\delta}\right]^2 P(\xi), & \delta\le|\xi|\le 2\pi-\delta,\\
0,&\mbox{elsewhere},
\end{array}
\right.
\end{equation}
where
$$
P(\xi):=\left[\Big(\frac{1}{(\pi-\delta)W(\delta)}-\frac{W'(\delta)}{W^2(\delta)}\Big)|\xi|
+\frac{\pi-2\delta}{(\pi-\delta)W(\delta)}+\frac{\delta W'(\delta)}{W^2(\delta)}\right].
$$
Then $\supp\hat{\phi}\subseteq[-2\pi+\delta,2\pi-\delta]$, $\hat{\phi}\in C^{1}(\bR)$ and $\hat{\phi}'$ is absolutely continuous on $\bR$. By (\ref{Wrange}), we have
\begin{equation}\label{phinorm}
\begin{array}{ll}
&\displaystyle{\|\hat{\phi}\|_{L^\infty(\bI_\delta)}\le \frac{3\pi-\delta}{(\pi-\delta)\gamma}+\frac{\pi\sigma}{\gamma^2},}\\
&\displaystyle{\|\hat{\phi}''\|_{L^\infty(\bI_\delta)}\le \frac{\sigma^2}{\gamma^3}+\frac{8}{(\pi-\delta)^3\gamma}+\frac{4\sigma}{(\pi-\delta)^2\gamma^2},}\\
&\displaystyle{ \|\hat{\phi}''\|_{L^1(\bI_\delta)}\le \frac{2\delta\sigma^2}{\gamma^3}+ \frac{32}{(\pi-\delta)^2\gamma}+\frac{16\sigma}{(\pi-\delta)\gamma^2}.}
\end{array}
\end{equation}
Finally, we shall use an elementary fact from the Fourier analysis \cite{Gasquet} that
\begin{equation}\label{Fourier}
|\phi(t-j)|\le \frac{1}{\sqrt{2\pi}}\frac{\|\hat{\phi}''\|_{L^1(\bI_\delta)}}{|t-j|^2},\ \ j\ge 2,\ t\in(0,1).
\end{equation}

The main result of this section is as follows.

\begin{theorem}\label{Theorem4} Let $\delta\in(0,\pi)$, $n\ge2$, $\gamma=\cos(\frac{\sigma\delta}{2})$, $r=n^{\frac56}$ and $\hat{\phi}$ be given as (\ref{phi}). Then the convergence rate of (\ref{averagesampling}) is
$$
\sup_{f\in\cB_\delta(\bR),\ \|f\|_{L^2(\bR)}\le 1} \|f-\bS_{n,r}f\|_{L^\infty((0,1))} \le c n^{-\frac53},
$$
where
$$
\begin{array}{ll}
&\displaystyle{c=\frac{(\sqrt{\delta}+2\delta)\sigma^2}{\gamma^3}
+\left(\frac{4\sigma\sqrt{\delta}}{(\pi-\delta)^2}+\frac{16\sigma}{\pi-\delta}+\pi\sigma\sqrt{\delta}\right)\frac{1}{\gamma^2}}\\
&\displaystyle{\quad+\left(\frac{8\sqrt{\delta}}{(\pi-\delta)^2}+\frac{32}{\pi-\delta}+10\sqrt{\delta}\right)\frac{1}{(\pi-\delta)\gamma}.}
\end{array}
$$
\end{theorem}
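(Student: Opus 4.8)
The plan is to reuse the two-part splitting behind Theorems~\ref{Theorem1} and \ref{Theorem3}. I would write $(f-\bS_{n,r}f)(t)=E_1(t)+E_2(t)$ for $t\in(0,1)$, where
$$
E_1(t):=f(t)-\frac{1}{\sqrt{2\pi}}\sum_{j\in\bZ}\tilde{f}(j)\phi(t-j)e^{-\frac{(t-j)^2}{2r^2}}
$$
is the regularization error committed on the full exact expansion (\ref{reconstruction2}), and
$$
E_2(t):=\frac{1}{\sqrt{2\pi}}\sum_{j\notin(-n,n]}\tilde{f}(j)\phi(t-j)e^{-\frac{(t-j)^2}{2r^2}}
$$
is the Gaussian-damped tail removed by the truncation in (\ref{averagesampling}). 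The frame upper bound in (\ref{Framebound}) gives $\sum_{j\in\bZ}|\tilde{f}(j)|^2\le\|f\|_{L^2(\bR)}^2\le1$, which plays the role that Parseval's identity (\ref{Parseval}) played in Theorem~\ref{Theorem1}.

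For $E_2$ I would invoke the decay estimate (\ref{Fourier}), namely $|\phi(t-j)|\le\frac{1}{\sqrt{2\pi}}\|\hat{\phi}''\|_{L^1(\bI_\delta)}|t-j|^{-2}$, together with $|t-j|\ge n$ for $j\notin(-n,n]$ and $t\in(0,1)$. Applying the Cauchy--Schwartz inequality exactly as in the proof of Theorem~\ref{Theorem1}, the frame bound controls $\sum_{j\notin(-n,n]}|\tilde{f}(j)|^2$ and (\ref{lemma2}) controls $\sum_{j\notin(-n,n]}e^{-(t-j)^2/r^2}$, leaving the factor $\frac{r}{n^2\sqrt{n-1}}e^{-\frac{(n-1)^2}{2r^2}}$ in front of $\|\hat{\phi}''\|_{L^1(\bI_\delta)}$. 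With $r=n^{5/6}$ the polynomial prefactor is already of order $n^{-5/3}$ and the exponential is harmless, so $E_2$ supplies the $\|\hat{\phi}''\|_{L^1}$ part of the constant $c$.

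The heart of the proof is $E_1$, which I would estimate on the Fourier side through $|E_1(t)|\le\frac{1}{\sqrt{2\pi}}\|\hat{E}_1\|_{L^1(\bR)}$. Setting $\psi_r(x):=\phi(x)e^{-x^2/(2r^2)}$ and periodizing the samples by Poisson summation, one gets $\hat{E}_1(\xi)=\hat{f}(\xi)-\hat{\psi}_r(\xi)\sum_{k\in\bZ}\hat{f}(\xi+2\pi k)W(\xi+2\pi k)$, with $W$ from (\ref{Wdefinition}). On the principal band $[-\delta,\delta]$ only the $k=0$ term survives and the exact-reconstruction identity (\ref{exactreconst}) turns this into $\hat{E}_1(\xi)=\hat{f}(\xi)W(\xi)\big(\hat{\phi}(\xi)-\hat{\psi}_r(\xi)\big)$. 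Because the Gaussian convolution kernel is even, its first moment vanishes, so a second-order Taylor estimate gives $|\hat{\phi}-\hat{\psi}_r|\le\|\hat{\phi}''\|_{L^\infty(\bI_\delta)}/(2r^2)$; then $W\le1$ and $\int_{-\delta}^\delta|\hat{f}|\le\sqrt{2\delta}$ by Cauchy--Schwartz yield the $\sqrt{\delta}\,\|\hat{\phi}''\|_{L^\infty}$ part of $c$, of order $r^{-2}=n^{-5/3}$. On the remaining (aliasing) bands near $\pm2\pi,\pm4\pi,\dots$ the surviving terms involve $\hat{\psi}_r$ evaluated outside $\supp\hat{\phi}=\bI_\delta$; here I would exploit that (\ref{phi}) makes $\hat{\phi}$ and $\hat{\phi}'$ vanish at $\pm(2\pi-\delta)$ so that the Gaussian-smeared tail is again of order $n^{-5/3}$, and, after a change of variables back to $[-\delta,\delta]$ with $\int_{-\delta}^\delta|\hat{f}|\le\sqrt{2\delta}$, these produce the $\sqrt{\delta}\,\|\hat{\phi}\|_{L^\infty}$ contribution, the bound $3\pi-\delta\le10$ from (\ref{phinorm}) being what yields the explicit $10\sqrt{\delta}$ coefficient.

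Finally I would add the two estimates, substitute $r=n^{5/6}$, and assemble $c$ from the three norm bounds in (\ref{phinorm}). The choice $r=n^{5/6}$ is precisely the value balancing $E_1\sim r^{-2}$ against the $E_2$ prefactor $\sim r\,n^{-5/2}$, the common order being $n^{-5/3}$. I expect the main obstacle to be the aliasing analysis of $E_1$: controlling $\hat{\psi}_r$ outside $\supp\hat{\phi}$ and extracting the correct decay requires careful use of the boundary vanishing of $\hat{\phi}$ prescribed by (\ref{phi}), while the Taylor estimate on the principal band must be justified from the mere $C^1$-smoothness of $\hat{\phi}$ with absolutely continuous derivative rather than from genuine $C^2$ regularity, most safely through the integral form of the remainder.
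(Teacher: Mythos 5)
Your proposal is correct and follows essentially the same route as the paper: the identical splitting into a full-series regularization error $\bF_1$ (bounded on the Fourier side via $\|\hat{\bF}_1\|_{L^1}$) and a Gaussian-damped tail $\bF_2$ (bounded via (\ref{Fourier}), (\ref{Framebound}), (\ref{lemma2}) and Cauchy--Schwartz), the same balancing $r=n^{5/6}$ of $r^{-2}$ against $r\,n^{-5/2}$, and the same assembly of $c$ from the three norm bounds in (\ref{phinorm}), including the $3\pi-\delta\le 10$ step. The only divergence is the bookkeeping inside the $\bF_1$ estimate: the paper restricts $\hat{\bF}_1$ to $[-\delta,\delta]$ outright and extracts the $\|\hat{\phi}\|_{L^\infty(\bI_\delta)}$ term from the far Gaussian tail $|\eta|\ge r$ of the convolution integral (with a Taylor remainder on $|\eta|<r$ supplying the $\|\hat{\phi}''\|_{L^\infty(\bI_\delta)}$ term), whereas you obtain it from the aliasing bands near $\pm 2\pi$; your version is, if anything, the more scrupulous one, since the paper's assertion that $\hat{\bF}_1$ vanishes off $[-\delta,\delta]$ quietly discards the (harmless, superexponentially small or $O(r^{-2})$) contributions where $\hat{\phi}$ vanishes but the periodized samples do not.
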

\begin{proof} Let $f\in\cB_\delta(\bR)$ with $\|f\|_{L^2(\bR)}\le 1$. Set $f(t)-\bS_{n,r} f(t):=\bF_1(t)+\bF_2(t)$, $t\in(0,1)$, where
$$
\begin{array}{ll}
&\displaystyle{\bF_1(t):=f(t)-\frac{1}{\sqrt{2\pi}}\sum_{j\in\bZ}\tilde{f}(j)\phi(t-j)e^{-\frac{(t-j)^2}{2r^2}}, }\\
&\displaystyle{\bF_2(t):=\frac{1}{\sqrt{2\pi}}\sum_{j\notin(-n,n]}\tilde{f}(j)\phi(t-j)e^{-\frac{(t-j)^2}{2r^2}}.}
\end{array}
$$

{\it Estimate of $\bF_1$.} By (\ref{reconstruction2}), we have $\hat{f}(\xi)=(\frac{1}{\sqrt{2\pi}}\sum_{j\in\bZ}\tilde{f}(j)\hat{\phi}(\xi)e^{-ij\xi})\chi_{[-\delta,\delta]}(\xi)$, $\xi\in\bR$. Then, for $\xi\in\bR$,
$$
\begin{array}{ll}
\displaystyle{\hat{\bF}_1(\xi)}
&\displaystyle{=\hat{f}(\xi)-\frac{1}{\sqrt{2\pi}} \sum_{j\in\bZ}\tilde{f}(j)\frac{1}{\sqrt{2\pi}}}\\
&\displaystyle{\quad\cdot \int_\bR r e^{-ij(\xi-\eta)}\hat{\phi}(\xi-\eta)e^{-(ij\eta+\frac{r^2\eta^2}{2})}d\eta}\\
&\displaystyle{=\hat{f}(\xi)-\frac{1}{\sqrt{2\pi}}\sum_{j\in\bZ}\tilde{f}(j)e^{-ij\xi} \hat{\phi}(\xi) \frac{1}{\hat{\phi}(\xi)}}\\
&\displaystyle{\quad\cdot\int_\bR \hat{\phi}(\xi-\eta)\frac{1}{\sqrt{2\pi}}r e^{-\frac{r^2\eta^2}{2}}d\eta}\\
&\displaystyle{=\frac{\hat{f}(\xi)\chi_{[-\delta,\delta]}(\xi)}{\hat{\phi}(\xi)}\Big[\hat{\phi}(\xi)-\frac{1}{\sqrt{2\pi}}\int_\bR \hat{\phi}(\xi-\eta)r e^{-\frac{r^2\eta^2}{2}}d\eta\Big].}\\
\end{array}
$$
If $|\xi|> \delta$ then $\hat{\bF}_1(\xi)=0$. If $|\xi|\le \delta$ then by (\ref{lemma2}), (\ref{Wrange}) and (\ref{phi}), we have
$$
\begin{array}{ll}
&\displaystyle{|\hat{\bF}_1(\xi)|}\\
&\displaystyle{\le \frac{|\hat{f}(\xi)|}{\sqrt{2\pi}|\hat{\phi}(\xi)|}\left[ \Big|\int_{|\eta|<\varepsilon} \Big(\hat{\phi}(\xi)-\hat{\phi}(\xi-\eta)\Big)r e^{-\frac{r^2\eta^2}{2}}d\eta\Big|\right.}\\
&\displaystyle{\quad\left.+\Big|\int_{|\eta|\ge\varepsilon} \Big(\hat{\phi}(\xi)-\hat{\phi}(\xi-\eta)\Big)r e^{-\frac{r^2\eta^2}{2}}d\eta\Big|\right]}\\
&\displaystyle{\le \frac{|\hat{f}(\xi)|}{\sqrt{2\pi}}\left[ \Big|\int_{|\eta|<\varepsilon} \Big(\hat{\phi}'(\xi)\eta+\frac{\hat{\phi}''(\xi-\theta \eta)}{2}\eta^2\Big)r e^{-\frac{r^2\eta^2}{2}}d\eta\Big|\right.}\\
&\displaystyle{\quad\left.+\int_{|\eta|\ge\varepsilon} \Big|\hat{\phi}(\xi)-\hat{\phi}(\xi-\eta)\Big| re^{-\frac{r^2\eta^2}{2}}d\eta\right]}\\
&\displaystyle{\le \frac{|\hat{f}(\xi)|}{\sqrt{2\pi}}\left[ \frac{\|\hat{\phi}''\|_{L^\infty(\bI_\delta)}}{2}\cdot\int_{|\eta|<\varepsilon} \eta^2r e^{-\frac{r^2\eta^2}{2}}d\eta\right.}\\
&\displaystyle{\quad\left. +2\|\hat{\phi}\|_{L^\infty(\bI_\delta)} \cdot\int_{|\eta|\ge\varepsilon} r e^{-\frac{r^2\eta^2}{2}}d\eta\right]}\\
&\displaystyle{\le \frac{|\hat{f}(\xi)|}{\sqrt{2\pi}}\left[ \frac{\|\hat{\phi}''\|_{L^\infty(\bI_\delta)}}{2r^2}\cdot\int_{|x|<r\varepsilon}x^2e^{-\frac{x^2}{2}}dx\right.}\\
&\displaystyle{\quad\left.+4\sqrt{2}\|\hat{\phi}\|_{L^\infty(\bI_\delta)}\cdot\int_{\frac{r\varepsilon}{\sqrt{2}}}^{+\infty} e^{-x^2}dx \right]},\\
\end{array}
$$
where $0<\theta<1$ and $\varepsilon:=r$. Combining the last inequality above with $\int_\bR x^2e^{-\frac{x^2}{2}}dx=\sqrt{2\pi}$ and (\ref{lemma1}) yields
$$
\begin{array}{ll}
|\hat{\bF}_1(\xi)|
&\displaystyle{\le \frac{|\hat{f}(\xi)|}{\sqrt{2\pi}}\left(\frac{\sqrt{2\pi}\|\hat{\phi}''\|_{L^\infty(\bI_\delta)}}{2r^2}+\frac{4e^{-\frac{r^4}{2}}\|\hat{\phi}\|_{L^\infty(\bI_\delta)}}{r^2}\right)}\\
&\displaystyle{\le \frac{|\hat{f}(\xi)|}{r^2}\left(\frac{\|\hat{\phi}''\|_{L^\infty(\bI_\delta)}}{2}+\frac{4\|\hat{\phi}\|_{L^\infty(\bI_\delta)}}{\sqrt{2\pi}}\right).}
\end{array}
$$
Bounding $\bF_1$ by the $L^1$-norm of its Fourier transform, we get by the above equation
\begin{equation}\label{E21}
\begin{array}{ll}
|\bF_1(t)|
&\displaystyle{\le\frac{\sqrt{\delta}}{r^2}\left(\frac{\|\hat{\phi}''\|_{L^\infty(\bI_\delta)}}{2\sqrt{\pi}}+\frac{4\|\hat{\phi}\|_{L^\infty(\bI_\delta)}}{\sqrt{2}\pi}\right)}\\
&\displaystyle{\le \frac{\Big(\|\hat{\phi}''\|_{L^\infty(\bI_\delta)}+\|\hat{\phi}\|_{L^\infty(\bI_\delta)}\Big)\sqrt{\delta}}{r^2},\ t\in(0,1).}
\end{array}
\end{equation}

{\it Estimate of $\bF_2$.} By (\ref{Fourier}) and by the Cauchy-Schwartz inequality,  for each $t\in(0,1)$,
$$
\begin{array}{ll}
\displaystyle{|\bF_2(t)|}
&\displaystyle{\le \sum_{j\notin(-n,n]} |\tilde{f}(j)||\phi(t-j)| e^{-\frac{(t-j)^2}{2r^2}}}\\
&\displaystyle{\le\frac{\|\hat{\phi}''\|_{L^1(\bI_\delta)}}{\sqrt{2\pi}n^2}\sum_{j\notin(-n,n]} |\tilde{f}(j)| e^{-\frac{(t-j)^2}{2r^2}}}\\
&\displaystyle{\le \frac{\|\hat{\phi}''\|_{L^1(\bI_\delta)}}{\sqrt{2\pi}n^2}\Big(\sum_{j\notin(-n,n]}|\tilde{f}(j)|^2\Big)^{\frac12}\Big(\sum_{j\notin(-n,n]}e^{-\frac{(t-j)^2}{r^2}} \Big)^{\frac12} }.
\end{array}
$$
It follows from (\ref{lemma2}) and (\ref{Framebound}) that
\begin{equation}\label{E22}
|\bF_2(t)|\le \frac{r\|\hat{\phi}''\|_{L^1(\bI_\delta)}}{\sqrt{2\pi}n^2\sqrt{n-1}}e^{-\frac{(n-1)^2}{2r^2}}\le \frac{r\|\hat{\phi}''\|_{L^1(\bI_\delta)}}{n^{\frac52}} .
\end{equation}
Taking $r=n^{\frac56}$ in (\ref{E21}) and (\ref{E22}), we obtain
$$
\begin{array}{ll}
&\displaystyle{
 \Big\|f-\bS_{n,r}f\Big\|_{L^\infty((0,1))}}\\
&\displaystyle{\le\left(\sqrt{\delta}\|\hat{\phi}''\|_{L^\infty(\bI_\delta)}+
\sqrt{\delta}\|\hat{\phi}\|_{L^\infty(\bI_\delta)}+\|\hat{\phi}''\|_{L^1(\bI_\delta)}\right)\frac{1}{n^{\frac53}},}
\end{array}
$$
which, by (\ref{phinorm}), completes the proof.
\end{proof}

We see that if $\hat{\phi}\in C^{1}(\bR)$ and $\hat{\phi}'$ is absolutely continuous on $\bR$ then the convergence rate of the direcr truncation (\ref{truncation}) is $O(n^{-\frac32})$. Therefore, the Gaussian regularized version (\ref{averagesampling}) indeed results in a better convergence rate.

\section{Numerical Experiments}
We present in this section numerical experiments to illustrate the effectiveness of the Gaussian regularized sampling formula and to demonstrate the validness of our convergence analysis for the formula.

\subsection{Univariate Bandlimited Functions}

Let $\delta<\pi$. The bandlimited function under investigation has the form
\begin{equation}\label{twosincfun}
f_\delta(t)=\frac{1}{\sqrt{\pi(5\delta+4\sin\delta)}}\Big(\frac{2\sin\delta t}{t}+\frac{\sin\delta(t-1)}{t-1}\Big),\ t\in\bR.
\end{equation}
We point out that $f_\delta\in\cB_\delta(\bR)$ and $\|f_\delta\|_{L^2(\bR)}=1$. We shall reconstruct the values of $f$ on $(0,1)$ from $\{f(j):-n+1\le j\le n\}$ by the the truncated Gaussian regularized Shannon sampling series
$$
(S_n f_\delta)(t)=\sum_{j=-n+1}^n f_\delta(j)\sinc(t-j)e^{-\frac{(t-j)^2(\pi-\delta)}{2(n-1)}},\ t\in(0,1).
$$
The error of reconstruction is measured by
$$
E(f_\delta-S_nf_\delta):=\max_{1\le j\le 99}\Big| (f_\delta-S_nf_\delta)\Big(\frac{j}{100}\Big)\Big|
$$
This error is to be compared with the theoretical estimate in Theorem \ref{Theorem1}:
$$
E_{\delta,n}:=\Big(\sqrt{2\delta}+\frac{1}{\sqrt{n}}\Big)\frac{e^{-\frac{(\pi-\delta)(n-1)}{2}}}{\pi\sqrt{(\pi-\delta)(n-1)}}, n\ge2.
$$
We also compute the reconstruction error resulting from directly truncating the Shannon series
$$
E(f_\delta-T_nf_\delta):=\max_{1\le j\le 99}\Big| (f_\delta-T_nf_\delta)\Big(\frac{j}{100}\Big)\Big|.
$$
for comparison, where
$$
(T_n f_\delta)(t):=\sum_{j=-n+1}^n f_\delta(j)\frac{\sin(\pi(t-j))}{\pi(t-j)},\  t\in(0,1).
$$
The above three errors for $n=2,4,\ldots,30$ and $\delta=\frac\pi3,\frac\pi2,\frac{2\pi}3$ are listed in TABLEs \ref{Tab1}, \ref{Tab2}, and \ref{Tab3}, respectively. We also plot $\log E_{\delta,n}$ and $\log E(f_\delta-S_nf_\delta)$ in Fig. \ref{fig1}. We see that the Gaussian regularized Shannon sampling formula converges extremely fast and our theoretical estimate for the convergence rate in Theorem \ref{Theorem1} is pretty sharp.

\begin{table}[htbp]
\centering
\begin{tabular}{|c|c|c|c|} \hline
& $E(f_\delta-T_nf_\delta)$ & $E_{\delta,n}$ &$E(f_\delta-S_nf_\delta)$  \\ \hline
  n=2 & 0.0331 & 0.1663 & 0.0509 \\  \hline
  n=4 & 0.0098 & 0.0107 & 0.0017 \\ \hline
  n=6 & 2.2761e-04 & 9.7124e-04 & 1.0932e-04 \\ \hline
  n=8 & 0.0024 & 9.8103e-05 & 8.5248e-06 \\ \hline
  n=10 & 0.0016 & 1.0433e-05 & 7.2866e-07 \\  \hline
  n=12 & 3.0405e-05 & 1.1440e-06 & 6.8632e-08 \\ \hline
  n=14 & 7.9461e-04 & 1.2799e-07 & 6.6728e-09 \\  \hline
  n=16 & 6.2247e-04 & 1.4525e-08 & 6.5985e-10 \\  \hline
  n=18 & 9.1326e-06 & 1.6661e-09 & 6.8773e-11 \\ \hline
  n=20 & 3.9146e-04 & 1.9267e-10 & 7.2196e-12 \\ \hline
  n=22 & 3.2878e-04 & 2.2428e-11 & 7.5895e-13 \\ \hline
  n=24 & 3.8717e-06 & 2.6246e-12 & 8.3156e-14 \\  \hline
  n=26 & 2.3225e-04 & 3.0851e-13 & 9.2149e-15 \\ \hline
  n=28 & 2.0277e-04 & 3.6399e-14 & 1.4433e-15 \\ \hline
  n=30 & 1.9869e-06 & 4.3080e-15 & 4.4409e-16 \\ \hline
\end{tabular}
\caption{Reconstruction erros for $\delta=\frac{\pi}{3}$.} \label{Tab1}
\end{table}
\begin{table}[htbp]
\centering
\begin{tabular}{|c|c|c|c|} \hline
& $E(f_\delta-T_nf_\delta)$ & $E_{\delta,n}$ &$E(f_\delta-S_nf_\delta)$  \\ \hline
   n=2 &0.0051&0.2871&0.0521  \\\hline
   n=4 &9.6902e-04&0.0316& 0.0030\\\hline
   n=6 &3.2453e-04&0.0049&3.1393e-04 \\ \hline
   n=8 &1.4422e-04&8.3589e-04 &4.2582e-05 \\ \hline
   n=10 &7.5827e-05&1.5055e-04&6.1037e-06  \\\hline
   n=12 &4.4556e-05&2.7936e-05&9.8330e-07 \\\hline
   n=14 &2.8327e-05&5.2865e-06&1.5771e-07 \\ \hline
   n=16 &1.9097e-05&1.0144e-06&2.7379e-08 \\ \hline
   n=18 &1.3472e-05&1.9668e-07&4.6564e-09  \\\hline
   n=20 &9.8525e-06&3.8441e-08&8.4284e-10 \\\hline
   n=22&7.4200e-06&7.5615e-09&1.4863e-10 \\ \hline
   n=24 &5.7257e-06&1.4951e-09&2.7629e-11 \\ \hline
  n=26 & 4.5098e-06 & 2.9691e-10 & 4.9942e-12 \\ \hline
  n=28 & 3.6149e-06 & 5.9176e-11 &9.4547e-13  \\ \hline
  n=30 & 2.9418e-06 & 1.1831e-11 & 1.7397e-13 \\ \hline
\end{tabular}
\caption{Reconstruction errors for $\delta=\frac{\pi}{2}$.} \label{Tab2}
\end{table}
\begin{table}[htbp]
\centering
\begin{tabular}{|c|c|c|c|} \hline
  & $E(f_\delta-T_nf_\delta)$ & $E_{\delta,n}$ &$E(f_\delta-S_nf_\delta)$  \\ \hline
   n=2 &0.0275&0.5074&0.0490  \\\hline
   n=4 &0.0064&0.0951&0.0049 \\\hline
   n=6 &4.6943e-04&0.0249&9.2523e-04 \\ \hline
   n=8 &0.0020&0.0072&1.9165e-04 \\ \hline
   n=10 &0.0012&0.0022&5.0903e-05  \\\hline
   n=12 &6.8750e-05&6.9045e-04&1.3828e-05 \\\hline
   n=14 &6.5098e-04&2.2083e-04&3.5771e-06 \\ \hline
   n=16 &4.6868e-04&7.1605e-05&1.0844e-06 \\ \hline
   n=18 &2.1160e-05&2.3456e-05&3.2408e-07  \\\hline
   n=20 &3.1800e-04&7.7448e-06&9.1110e-08 \\\hline
   n=22 &2.5108e-04&2.5733e-06&2.9088e-08 \\ \hline
   n=24 &9.0566e-06&8.5940e-07&9.0859e-09 \\ \hline
   n=26& 1.8765e-04& 2.8824e-07&2.6710e-09\\ \hline
   n=28& 1.5607e-04&9.7020e-08& 8.7632e-10\\ \hline
   n=30& 4.6691e-06&3.2757e-08&2.8080e-10\\ \hline
\end{tabular}
\caption{Reconstruction errors for $\delta=\frac{2\pi}{3}$.} \label{Tab3}
\end{table}
\vspace{-0.5cm}
\begin{figure}[htbp]
\centering
  \includegraphics[width=4in]{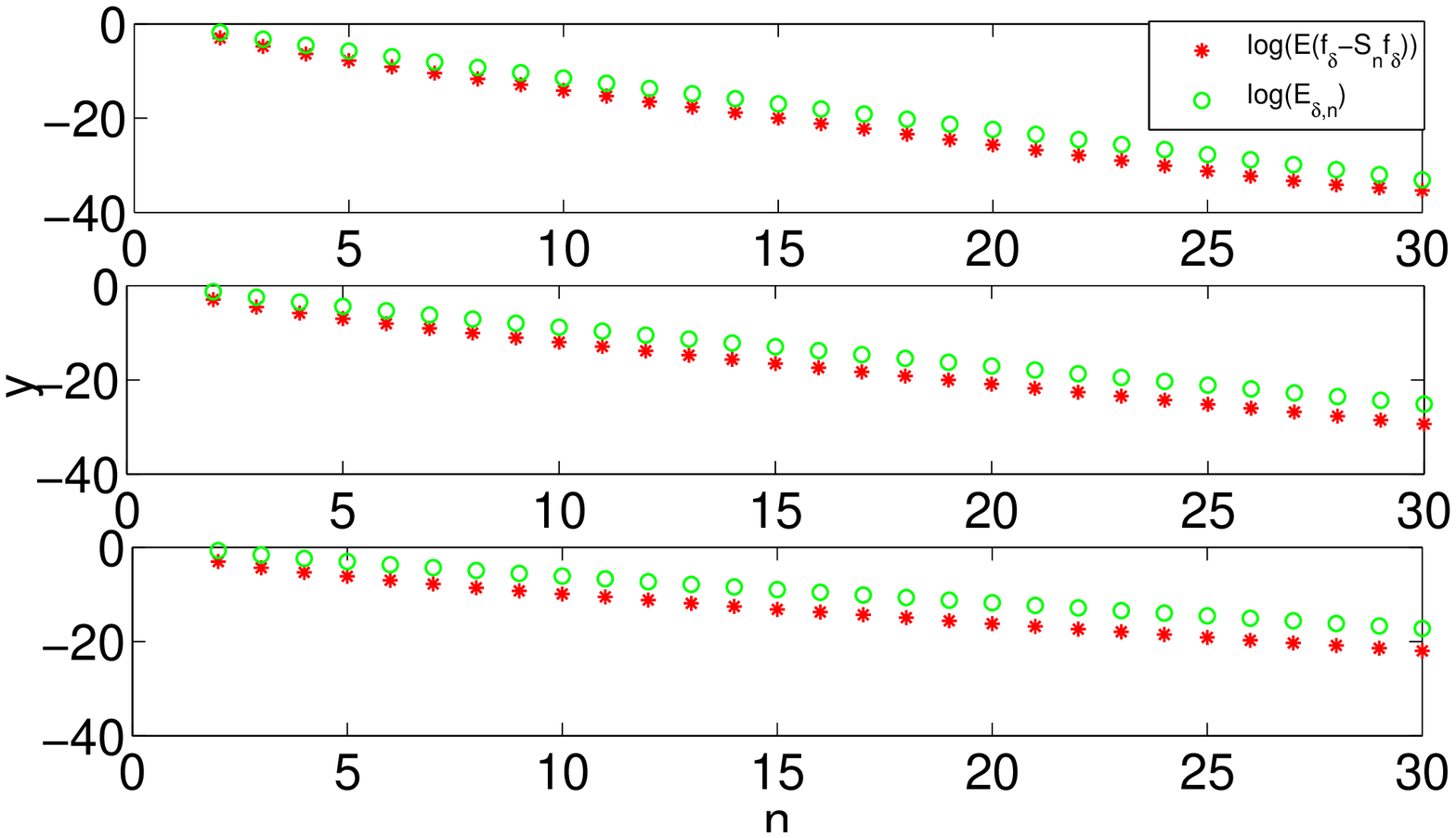}\\
  \caption{Comparison of $\log E_{\delta,n}$ and $\log E(f_\delta-S_nf_\delta)$ for $\delta=\frac{\pi}{3}$ (Top), $\delta=\frac{\pi}{2}$ (Middle), and $\delta=\frac{2\pi}{3}$ (Bottom). Here $n=2,3,4,\dots,30$.}\label{fig1}
\end{figure}

\subsection{Derivative of Univariate Bandlimited Functions}

We now turn to verify the reconstruction error for the first derivative of a univariate bandlimited function. We use $f_\delta$ in the above subsection for simulation. Let
$$
E'_{\delta,n}:=\Big(\sqrt{\frac{2\delta}{3}}\delta+\frac{144}{\sqrt{n}}\Big)\frac{e^{-\frac{(\pi-\delta)(n-2)}{2}}}{\pi\sqrt{(\pi-\delta)(n-2)}}
$$
be the theoretical upper bound established for the derivative in Theorem \ref{Theorem2}. This upper bound together with the errors of the two reconstruction methods are listed in TABLEs \ref{Tab4}, \ref{Tab5}, and \ref{Tab6}. In particular, $\log E'_{\delta,n}$ and $\log E(f'_\delta-(S_nf_\delta)')$ are plotted in Fig. \ref{fig2}.

\begin{table}[htbp]
\centering
\begin{tabular}{|c|c|c|c|} \hline
& $E(f'_\delta-(T_nf_\delta)')$ & $E'_{\delta,n}$ &$E(f'_\delta-(S_nf_\delta)')$  \\ \hline
    n=3 & 0.0064 & 6.4845 & 0.0238 \\ \hline
    n=5 & 0.0192 & 0.3582 & 0.0012 \\ \hline
    n=7 & 0.0103 & 0.0289 & 8.6147e-05 \\ \hline
    n=9 & 2.8936e-04 & 0.0027 & 7.2344e-06 \\ \hline
    n=11 & 0.0041 & 2.6206e-04 & 6.5240e-07 \\ \hline
    n=13 & 0.0030 & 2.6989e-05 & 6.1290e-08 \\ \hline
    n=15 & 6.3904e-05 & 2.8410e-06 & 6.1368e-09 \\ \hline
    n=17 & 0.0017 & 3.0639e-07 & 6.2431e-10 \\ \hline
    n=19 & 0.0014 & 3.3571e-08 & 6.4088e-11 \\ \hline
    n=21 & 2.3438e-05 & 3.7246e-09 & 6.8596e-12 \\ \hline
    n=23 & 9.3372e-04 & 4.1740e-10 & 7.3584e-13 \\ \hline
    n=25 & 8.0083e-04 & 4.7166e-11 & 7.8978e-14 \\ \hline
    n=27 & 1.1058e-05 & 5.3670e-12 & 9.9087e-15 \\ \hline
    n=29 & 5.8806e-04 & 6.1433e-13 & 3.9690e-15  \\ \hline
\end{tabular}
\caption{Reconstruction errors for the first derivative when $\delta=\frac{\pi}{3}$.} \label{Tab4}
\end{table}

\begin{table}[htbp]
\centering
\begin{tabular}{|c|c|c|c|} \hline
& $E(f'_\delta-(T_nf_\delta)')$ & $E'_{\delta,n}$ &$E(f'_\delta-(S_nf_\delta)')$  \\ \hline
    n=3 & 0.0520 & 9.8133 & 0.0321 \\ \hline
    n=5 & 0.0192 & 0.9173 & 0.0028 \\ \hline
    n=7 & 0.0099 & 0.1254 & 3.5204e-04 \\ \hline
    n=9 & 0.0060 & 0.0195 & 4.8058e-05 \\ \hline
    n=11 & 0.0040 & 0.0032 & 7.6107e-06 \\ \hline
    n=13 & 0.0029 & 5.6310e-04 & 1.1889e-06 \\ \hline
    n=15 & 0.0022 & 1.0053e-04 & 2.0547e-07 \\ \hline
    n=17 & 0.0017 & 1.8324e-05 & 3.4305e-08 \\ \hline
    n=19 & 0.0014 & 3.3930e-06 & 6.2129e-09 \\ \hline
    n=21 & 0.0011 & 6.3613e-07 & 1.0797e-09 \\ \hline
    n=23 & 9.2764e-04 & 1.2046e-07 & 2.0128e-10 \\ \hline
    n=25 & 7.8531e-04 & 2.3001e-08 &3.5939e-11  \\ \hline
    n=27 & 6.7338e-04 & 4.4222e-09 & 6.8315e-12 \\ \hline
    n=29 & 5.8378e-04 & 8.5523e-10 & 1.2441e-12 \\ \hline
\end{tabular}
\caption{Reconstruction errors for the first derivative when $\delta=\frac{\pi}{2}$.} \label{Tab5}
\end{table}
\begin{table}[htbp]
\centering
\begin{tabular}{|c|c|c|c|} \hline
& $E(f'_\delta-(T_nf_\delta)')$ & $E'_{\delta,n}$ &$E(f'_\delta-(S_nf_\delta)')$  \\ \hline
    n=3 & 0.0110 & 15.7754 & 0.0419 \\ \hline
    n=5 & 0.0159 & 2.4966 & 0.0071 \\ \hline
    n=7 & 0.0075 & 0.5774 & 0.0014 \\ \hline
    n=9 & 6.3518e-04 & 0.1519 & 3.2892e-04 \\ \hline
    n=11 & 0.0033 & 0.0427 & 9.3779e-05 \\ \hline
    n=13 & 0.0022 & 0.0125 & 2.4603e-05 \\ \hline
    n=15 & 1.4672e-04 & 0.0038 & 6.7163e-06 \\ \hline
    n=17 & 0.0014 & 0.0012 & 2.1156e-06 \\ \hline
    n=19 & 0.0011 & 3.6495e-04 & 6.0578e-07 \\ \hline
    n=21 & 5.4615e-05 & 1.1564e-04 & 1.7733e-07 \\ \hline
    n=23 & 7.5824e-04 & 3.7009e-05 & 5.8201e-08 \\ \hline
    n=25 & 6.1786e-04 & 1.1941e-05 & 1.7429e-08 \\ \hline
     n=27 & 2.5934e-05 & 3.8797e-06 & 5.3009e-09 \\ \hline
    n=29 & 4.7534e-04 & 1.2678e-06 & 1.7783e-09 \\ \hline
\end{tabular}
\caption{Reconstruction errors for the first derivative when $\delta=\frac{2\pi}{3}$.} \label{Tab6}
\end{table}
\begin{figure}[htbp]
\centering
  \includegraphics[width=4.2in]{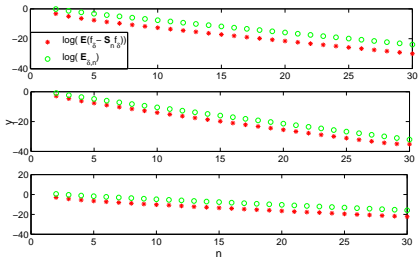}\\
  \caption{Comparison of $\log E'_{\delta,n}$ and $\log E(f'_\delta-(S_nf_\delta)')$
  for $\delta=\frac{\pi}{3}$ (Top), $\delta=\frac{\pi}{2}$ (Middle), $\delta=\frac{2\pi}{3}$ (Bottom).}\label{fig2}
\end{figure}

\subsection{Bivariate Bandlimited Functions}

We consider reconstructing bivariate bandlimited functions by the regularized Shannon sampling formula in this subsection. The function to be reconstructed is given by
$$
f_\delta(t_1,t_2)=f_{\delta_1}(t_1)f_{\delta_2}(t_2),\ t_1,t_2\in\bR,
$$
where $\delta=(\delta_1,\delta_2)$ with $\delta_1\le \delta_2<\pi$, and $f_{\delta_1}$,$f_{\delta_2}$ are defined by (\ref{twosincfun}).  Clearly, $f_\delta\in\cB_\delta(\bR^2)$ and $\|f_\delta\|_{L^2(\bR^2)}=1$. The Gaussian regularized and direct truncation of the Shannon sampling formula are respectively given by
$$
\begin{array}{ll}
({\bf S}_n f_\delta)(t_1,t_2)
&\displaystyle{=\sum_{j_2=-n+1}^n\sum_{j_1=-n+1}^n f_\delta(j_1,j_2)\sinc(t_1-j_1)}\\
&\displaystyle{\quad\cdot\sinc(t_2-j_2)e^{-\frac{((t_1-j_1)^2+(t_2-j_2)^2)(\pi-\delta_2)}{2(n-1)}}}
\end{array}
$$
and
$$
\begin{array}{ll}
({\bf T}_n f_\delta)(t)
&\displaystyle{=\sum_{j_2=-n+1}^n\sum_{j_1=-n+1}^n f_\delta(j_1,j_2)\sinc(t_1-j_1)}\\
&\displaystyle{\quad\cdot\sinc(t_2-j_2).}
\end{array}
$$
Similarly, we shall compare the following two reconstruction errors:
$$
{\bf E}(f_\delta-{\bf S}_nf_\delta):=\max_{1\le j_1,j_2\le 49}\Big| (f_\delta-{\bf S}_nf_\delta)\Big(\frac{j_1}{50},\frac{j_2}{50}\Big)\Big|,
$$
$$
{\bf E}(f_\delta-{\bf T}_nf_\delta):=\max_{1\le j_1,j_2\le 49}\Big| (f_\delta-{\bf T}_nf_\delta)\Big(\frac{j_1}{50},\frac{j_2}{50}\Big)\Big|,
$$
and the theoretical upper bound for the Gaussian regularized method established in Theorem \ref{Theorem3}
$$
{\bf E}_{\delta,n}:=\Big(4\delta_2+\sqrt{\frac2n}\Big)\frac{e^{-\frac{(\pi-\delta_2)(n-1)}{2}}}{\pi\sqrt{(\pi-\delta_2)(n-1)}}, n\ge2.
$$

The results are shown in TABLEs \ref{Tab7}--\ref{Tab9}, and plotted in Fig. \ref{fig3}. Again, we see that our estimate in Theorem \ref{Theorem3} is valid and sharp.

\begin{table*}[htbp]
\centering
\begin{tabular}{|c|c|c|c|}\hline
    & ${\bf E}(f_\delta-{\bf T}_nf_\delta)$ & ${\bf E}_{\delta,n}$ & ${\bf E}(f_\delta-{\bf S}_nf_\delta)$ \\  \hline
  n=2 & 0.0316 & 0.8434 & 0.0378\\ \hline
  n=4 & 4.6243e-04 & 0.0971 & 0.0016 \\ \hline
  n=6 & 0.0041 & 0.0154 & 1.5469e-04 \\ \hline
  n=8 & 1.2792e-04 & 0.0027 & 2.1566e-05 \\ \hline
  n=10 & 0.0014 & 4.8514e-04 & 3.1876e-06 \\ \hline
  n=12 & 2.1342e-05 & 9.0694e-05 & 4.8749e-07 \\  \hline
  n=14 & 7.5353e-04 & 1.7264e-05 & 7.7783e-08 \\  \hline
  n=16 & 1.6774e-05 & 3.3288e-06 & 1.3819e-08 \\  \hline
  n=18 & 4.4541e-04 & 6.4803e-07 & 2.4148e-09 \\  \hline
  n=20 & 4.7213e-06 & 1.2710e-07 & 4.1835e-10 \\  \hline
  n=22 & 3.0412e-04 & 2.5075e-08 & 7.3353e-11 \\  \hline
  n=24 & 5.0184e-06 & 4.9711e-09 & 1.3924e-11 \\ \hline
  n=26 & 2.1432e-04 & 9.8948e-10 & 2.5773e-12 \\  \hline
  n=28 & 1.7325e-06 & 1.9762E-10 & 4.6990e-13 \\  \hline
  n=30 & 1.6324e-04 & 3.9585e-11 & 8.5987e-14 \\  \hline
\end{tabular}
\caption{Comparison of two reconstruction errors and the theoretical upper bound for $\delta=(\frac{\pi}{4},\frac{\pi}{2})$.}\label{Tab7}
\end{table*}

\begin{table*}[htbp]
\centering
\begin{tabular}{|c|c|c|c|}\hline
 & ${\bf E}(f_\delta-{\bf T}_nf_\delta)$ & ${\bf E}_{\delta,n}$ & ${\bf E}(f_\delta-{\bf S}_nf_\delta)$ \\  \hline
  n=2 & 0.0369 & 0.4005 & 0.0558 \\ \hline
  n=4 & 0.0114 & 0.0269 & 0.0020 \\ \hline
  n=6 & 2.6001e-04 & 0.0025 & 1.2529e-04 \\ \hline
  n=8 & 0.0027 & 2.5544e-04 & 9.7707e-06 \\ \hline
  n=10 & 0.0018 & 2.7429e-05 & 8.3512e-07 \\ \hline
  n=12 & 3.4742e-05 & 3.0296e-06 & 7.8652e-08 \\  \hline
  n=14 & 9.1131e-04 & 3.4093e-07 & 7.6473e-09 \\  \hline
  n=16 & 7.1478e-04 & 3.8875e-08 & 7.5614e-10 \\  \hline
  n=18 & 1.0435e-05 & 4.4769e-09 & 7.8805e-11 \\  \hline
  n=20 & 4.4911e-04 & 5.1951e-10 & 8.2729e-12 \\  \hline
  n=22 & 3.7744e-04 & 6.0653e-11 & 8.6942e-13 \\  \hline
  n=24 & 4.4241e-06 & 7.1166e-12 & 9.4758e-14 \\ \hline
  n=26 & 2.6649e-04 & 8.3846e-13 & 1.0381e-14 \\  \hline
  n=28 & 2.3275e-04 & 9.9129e-14 & 1.3323e-15 \\  \hline
  n=30 & 2.2703e-06 & 1.1755e-14 & 5.5511e-16 \\  \hline
\end{tabular}
\caption{Comparison of two reconstruction errors and the theoretical upper bound for $\delta=(\frac{\pi}{3},\frac{\pi}{3})$.}\label{Tab8}
\end{table*}

\begin{table*}[htbp]
\centering
\begin{tabular}{|c|c|c|c|}\hline
   & ${\bf E}(f_\delta-{\bf T}_nf_\delta)$ & ${\bf E}_{\delta,n}$ & ${\bf E}(f_\delta-{\bf S}_nf_\delta)$ \\  \hline
  n=2 & 0.0232 & 1.7279 & 0.0497\\ \hline
  n=4 & 0.0052 & 0.3392 & 0.0040 \\ \hline
  n=6 & 3.0946e-04 & 0.0909 & 6.4739e-04 \\ \hline
  n=8 & 0.0014 & 0.0267 & 1.4029e-04 \\ \hline
  n=10 & 7.9255e-04 & 0.0082 & 3.5117e-05 \\ \hline
  n=12 & 8.1831e-05 & 0.0026 & 9.9855e-06 \\  \hline
  n=14 & 4.7669e-04 & 8.3560e-04 & 2.4699e-06 \\  \hline
  n=16 & 3.4229e-04 & 2.7222e-04 & 7.8074e-07 \\  \hline
  n=18 & 1.3993e-05 & 8.9525e-05 & 2.2438e-07 \\  \hline
  n=20 & 2.1941e-04 & 2.9658e-05 & 6.5528e-08 \\  \hline
  n=22 & 1.7337e-04 & 9.8829e-06 & 2.0155e-08 \\  \hline
  n=24 & 1.0669e-05 & 3.3090e-06 & 6.5146e-09 \\ \hline
  n=26 & 1.3496e-04 & 1.1123e-06 & 1.8527e-09 \\  \hline
  n=28 & 1.1215e-04 & 3.7516e-07 & 6.2749e-10 \\  \hline
  n=30 & 3.0888e-06 & 1.2690e-07 & 1.9506e-10 \\  \hline
\end{tabular}
\caption{Comparison of two reconstruction errors and the theoretical upper bound for $\delta=(\frac{\pi}{2},\frac{2\pi}{3})$.}\label{Tab9}
\end{table*}

\subsection{Average Sampling}

Finally, we briefly discuss about the average sampling. Consider reconstructing the function $f_\delta$ defined by (\ref{twosincfun}) from its finite average sampling data
$$
\widetilde{f_\delta}(j):=\frac{2}{3}f_\delta(j)+\frac{f_\delta(j-\frac18)+f_\delta(j+\frac18)+f_\delta(j-\frac1{16})+f_\delta(j+\frac1{16})}{12}.
$$
By our analysis in Section III, we shall use the following truncated Gaussian regularized reconstruction formula
$$
(\bS_n f_\delta)(t)=\frac{1}{\sqrt{2\pi}}\sum_{j=-n+1}^n \tilde{f}_\delta(j)\phi(t-j)e^{-\frac{(t-j)^2}{2n^{5/3}}},\ t\in (0,1),
$$
where
$$
\begin{array}{ll}
\phi(t)
&\displaystyle{=\sqrt{\frac{2}{\pi}}\left(\int_0^\delta \frac{\cos (t\xi)}{W(\xi)}d\xi+\int_{\delta}^{2\pi-\delta}
\Big(\frac{2\pi-\delta-\xi}{2\pi-2\delta}\Big)^2\right.}\\
&\displaystyle{\quad\cdot\left[\Big(\frac{1}{(\pi-\delta)W(\delta)}-\frac{W'(\delta)}{W^2(\delta)}\Big)\xi+\frac{\pi-2\delta}{(\pi-\delta)W(\delta)}\right.}\\
&\displaystyle{\quad\left.\left.+\frac{\delta W'(\delta)}{W^2(\delta)}\right]\cos(t\xi)d\xi\right).}\\
\end{array}
$$
The reconstruction error is measured by
$$
\bE_{\sigma}(f_\delta-\bS_nf_\delta):=\max_{1\le j\le 19}\Big|(f_\delta-\bS_nf_\delta)\Big(\frac{j}{20}\Big)\Big|.
$$
Results are shown in TABLE \ref{Tab4.31}, which are superior than following estimate in Theorem \ref{Theorem4}
$$
\bE_{\sigma,\delta,n}:=cn^{-\frac53}
$$
and thereby demonstrating validness of the estimate.

\begin{table}[htbp]
\centering
\begin{tabular}{|c|c|c|c|}\hline
      & $\bE_{\sigma,\delta,n}$ & $\bE_{\sigma}(f_\delta-\bS_nf_\delta)$ \\  \hline
n=2&8.9724&0.0175\\  \hline
n=4&2.8261&0.0108\\  \hline
n=6&1.4378&0.0104\\  \hline
n=8&0.8902&0.0100\\  \hline
n=10&0.6137&0.0099\\  \hline
n=12&0.4529&0.0099\\  \hline
n=14&0.3503&0.0098\\  \hline
n=16&0.2804&0.0098\\  \hline
\end{tabular}
\caption{Comparison of reconstruction error and theoretical upper bound when
$\nu=\frac{1}{12}\delta_{-1/8}+\frac{1}{12}\delta_{-1/16}+\frac{2}{3}\delta_0+\frac{1}{12}\delta_{1/16}+\frac{1}{12}\delta_{1/8}$,
$\delta=\frac{\pi}{2}$, and $\sigma=\frac14$.} \label{Tab4.31}
\end{table}

\begin{figure}[htbp]
\centering
  \includegraphics[width=4.2in]{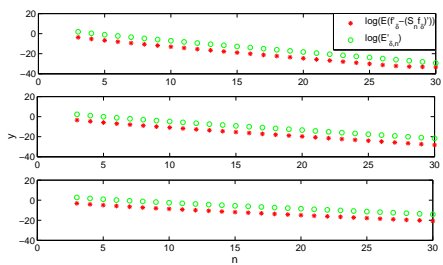}\\
  \caption{Comparison of $\log {\bf E}_{\delta,n}$ and $\log {\bf E}(f_\delta-{\bf S}_nf_\delta)$
  for $\delta=(\frac{\pi}{4},\frac{\pi}{2})$ (Top), $\delta=(\frac{\pi}{3},\frac{\pi}{3})$ (Middle), $\delta=(\frac{\pi}{2},\frac{2\pi}{3})$ (Bottom).}\label{fig3}
\end{figure}

\section{Conclusion}
We present refined convergence analysis for the Gaussian regularized Shannon sampling formula in reconstructing bandlimited functions. The formula is simple, highly efficient, and has received considerable attention in engineering applications. We show in the paper that the formula can achieve by far the best convergence rate among all regularization methods for the Shannon sampling series. Extensive numerical experiments show that our theoretical estimates of the convergence rates are valid and sharp.


\ifCLASSOPTIONcaptionsoff
  \newpage
\fi
\vspace{-0.4in}
\begin{IEEEbiographynophoto}{R. Lin}
received the B.S. degree in mathematics and applied  mathematics from Zhangzhou Normal University, Zhangzhou, P.R. China, in 2012.  He is currently pursuing the Ph.D. degree in computational mathematics at Sun Yat-sen University, Guangzhou, P.R. China, through a five-year Ph.D. program.

His research interests include time-frequency analysis and kernel methods in machine learning.
\end{IEEEbiographynophoto}
\vspace{-0.4in}
\begin{IEEEbiographynophoto}{H. Zhang}
received the B.S. degree in mathematics and applied  mathematics from Beijing Normal University, Beijing, P.R. China, in 2003, the M.S. degree in computational mathematics from the Chinese Academy of Sciences, Beijing, P. R. China in 2006, and the Ph.D. degree in mathematics from Syracuse University, NY, in 2009.

From June 2009 to May 2010, he was a postdoctoral research fellow at University of Michigan, Ann Arbor. Since June 2010, he has been a Professor with the School of Mathematics and Computational Science, Sun Yat-sen University, Guangzhou, P. R. China.

Prof. Zhang's research interests include computational harmonic analysis, kernel methods in machine learning, and sampling theory.
\end{IEEEbiographynophoto}









\end{document}